\documentclass[10pt, journal]{IEEEtran}
\usepackage{caption}
\usepackage{subcaption}
\usepackage{lipsum}
\usepackage{cite}
\usepackage{array}

% * <mehdis@sabanciuniv.edu> 2016-08-27T11:48:40.750Z:
%
% ^.
%\usepackage{cite}

%\usepackage{color,soul}
%\usepackage{sansmath}
%\usepackage{dsfont}
\usepackage{color}

%\algdisablelines

\usepackage{graphics}
\usepackage{amsmath,amsfonts, amssymb}
\usepackage{mathtools}
\usepackage{epstopdf}
\usepackage{amsthm}
\usepackage{bbm}
\usepackage{algorithm}
\usepackage[noend]{algpseudocode}
\usepackage{dsfont}
\graphicspath{{/}}

%\usepackage[cmex10]{amsmath}
%\usepackage{graphicx} % for pdf, bitmapped graphics files
%\usepackage{epsfig} % for postscript graphics files
%\usepackage{mathptmx} % assumes new font selection scheme installed

%\usepackage{times} % assumes new font selection scheme installed
%\usepackage{amsmath} % assumes amsmath package installed
%\usepackage{amssymb}  % assumes amsmath package installed
%\usepackage{color}

% *** GRAPHICS RELATED PACKAGES ***
%
\ifCLASSINFOpdf
  % \usepackage[pdftex]{graphicx}
  % declare the path(s) where your graphic files are
  % \graphicspath{{../pdf/}{../jpeg/}}
  % and their extensions so you won't have to specify these with
  % every instance of \includegraphics
  % \DeclareGraphicsExtensions{.pdf,.jpeg,.png}
\else
  % or other class option (dvipsone, dvipdf, if not using dvips). graphicx
  % will default to the driver specified in the system graphics.cfg if no
  % driver is specified.
  % \usepackage[dvips]{graphicx}
  % declare the path(s) where your graphic files are
  % \graphicspath{{../eps/}}
  % and their extensions so you won't have to specify these with
  % every instance of \includegraphics
  % \DeclareGraphicsExtensions{.eps}
\fi

% correct bad hyphenation here
\hyphenation{op-tical net-works semi-conduc-tor}

\begin{document}

\title{Throughput Optimal Random Medium Access Control for Relay Networks with Time-Varying Channels}

\author{Mehdi~Salehi~Heydar~Abad,
        Ozgur~Ercetin,
        Eylem~Ekici % <-this % stops a space
\thanks{ M.S.H.~Abad  and O.~Ercetin are with the Faculty of Engineering and Natural Sciences, Sabanci University, 34956 Istanbul, Turkey.}
% note need leading \protect in front of \\ to get a newline within \thanks as
% \\ is fragile and will error, could use \hfil\break instead.
%E-mail: 
\thanks{ E.~Ekici is with the Department of Electronics and Computer Engineering, The Ohio State University, Columbus, OH.}
% <-this % stops a space
\thanks{This work is supported in part by a grant from European Commission MC-IRSES programme.}}

% make the title area

\maketitle

\begin{abstract}
The use of existing network devices as relays has a potential to improve the overall network performance. In this work, we consider a two-hop wireless relay setting, where the channels between the source and relay nodes to the destination node are time varying. The relay nodes are able to overhear the transmissions of the source node which may have a weak connection to the destination, and they help the source node by forwarding its messages to the destination on its behalf, whenever this is needed. We develop a distributed scheme for relay selection and channel access that is suitable for time-varying channels, and prove that this scheme is throughput optimal. We obtain the achievable rate region of our proposed scheme analytically for a relay network with a single source and a single relay node.  Meanwhile, for a more general network with more than one relay nodes, we perform Monte-Carlo simulations to obtain the achievable rate region.  In both cases, we demonstrate that the achievable rate region attained with our distributed scheme is the same as the one attained with centralized optimal scheme.
\end{abstract}
\begin{IEEEkeywords}
Relay networks, throughput optimal, distributed.
\end{IEEEkeywords}

\IEEEpeerreviewmaketitle

\section{Introduction}

\subsection{Motivation}
\label{sec:Motivation}

% Relay assisted communication help overcome fading and attenuation
% in wireless networks. The idea of cooperation emerged from the fact that using multiple antennas to achieve spatial diversity results in effective mitigation of fading, shadowing, attenuation and
% interference \cite{div1,div2,div3,div4,div5}. In wireless networks, it is possible to achieve the spatial diversity through the combined use
% of antenna resources of different nodes. This realization led to cooperation among wireless nodes by means of message relaying.
% Previous works on {\em relay channel}  \cite{Ref2} demonstrate that significant performance gains in terms of maximum achievable rate of the network are possible.  However, most of the prior works focus on the physical layer and information-theoretic aspects of relaying \cite{Ref8}. Additional gains are also possible with the careful design of the cooperation at the network layer \cite{ephremiduscog}.

%Wireless data traffic is witnessing a tremendous amount of growth over the past decade. In 2015, wireless data traffic is  increased by 137\% with respect to 2014. To deal with the data hungry devices, it is of utmost importance to make the best use of wireless bandwidth which is scarce and cannot be regenerated. 
In wireless networks, there is always a node which has the worst channel quality (e.g., cell edge node) due to its location, shadowing effects or physical radio capabilities. Wireless channel quality for these nodes may become so weak that in order to maintain a reliable communication, some
relays have to be employed on the path between the
source and the destination node to aid their communication \cite{RN1}. Instead of deploying separate devices solely used for relaying, it is possible to utilize the existing wireless devices in the network for the same purpose. It is expected that future wireless devices are going to be highly adaptive and flexible, and thus, opening up new paradigms where they can adapt themselves utilizing any network resource, e.g., as other peer nodes in their vicinity, to improve their communications.

One of such paradigms is relay networks where the aim is to improve the network throughput by allowing wireless nodes to participate in the transmission, when they are neither the intended destination nor the source \cite{RN2}. A widely used wireless relay network model consists of two hops, i.e., there are a source, a destination,
and multiple relay nodes \cite{coop-top} as in Figure \ref{sysmod1}. The basic idea is that the relay nodes, overhearing the message transmitted from the source node, forward the same message to the destination node instead of treating it as interference \cite{RN3}.

\begin{figure}
	\centering
		\includegraphics[scale=.5]{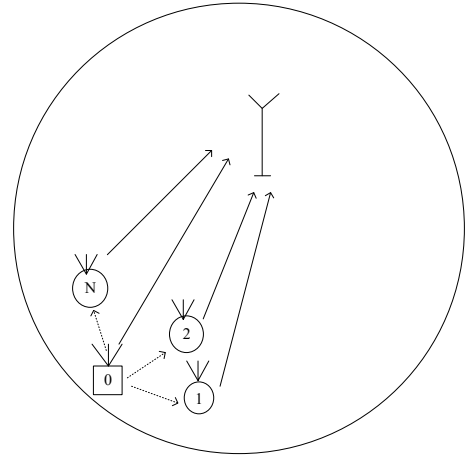}
	\caption{Relay Network.}
	\label{sysmod1}
\end{figure}

Fundamentally, this idea of relaying is very similar to routing in multi-hop networks, and there is already a  rich literature on the subject. In particular, centralized {\em throughput optimal} policies were developed for opportunistic multi-hop relay networks in \cite{tas1,cent-opt,neely-div}. A policy is {\em throughput optimal} in the sense of having an achievable rate region that coincides with the network achievable rate region, and thus, a superset of the achievable rate region of any other policy. However, a centralized policy requires global information of the entire network. The collection of this information is costly in terms of power consumption and time. On the other hand, distributed (decentralized) policies may alleviate these implementation costs by running a loosely coordinated scheme at each network node.

A common network setting is where the source and relay nodes are located in such a way that the source is {\em furthest} to the destination, while the relays are relatively {\em closer}. This notion of distance includes both real distances in free-space communication, or virtual distances that also include the effects of fading and shadowing \cite{ephremiduscog}.  Hence, the source has the worst channel quality in terms of probability of successfully transmitting its packets to the destination.

There is a number of works developing distributed policies for two-hop wireless relay networks with the objective of improving the transmission rate of the source node \cite{pappas,R2,R4,R5}. 
In these works, the time-varying wireless channel is modeled as an ON-OFF random process for analytical tractability, where a packet is successfully decoded at the destination node when the channel is ON or it is lost when the channel is OFF. Note that modeling the channel as an ON-OFF random process is an approximation of a more general time-varying channel model, where the channel is a multi-state process, with each state supporting a maximum transmission rate with which the data can be reliably sent \cite{srikantstable}. The authors then propose \emph{suboptimal} channel access algorithms, and calculate the achievable transmission rate of the source node by employing  Loynes' theorem \cite{loynes}. %which states that if the average arrival rate to a queue is less than the average service rate, then the queue is stable. 

Unlike prior works, we develop a throughput optimal distributed policy called Relay Q-CSMA (RQ-CSMA), for relay networks with time-varying channels, and show that its achievable rate region coincides with the achievable rate region of a centralized optimal policy. Our policy achieves this without the knowledge of the channel statistics, and it performs the scheduling of the source and relay nodes by only utilizing local information without any explicit message exchanges.

\subsection{Related Work}
\label{sec:RelatedWork}

The network setting used in this paper was previously studied under different contexts and assumptions in the literature. In particular, for multi-hop opportunistic routing it is shown that Maximum Weight Scheduling (MWS) algorithm, which is a centralized scheme is throughput optimal \cite{tas1}.  MWS  and many of its variants exist in the literature addressing the optimal resource allocation under different network assumptions and for different applications, e.g., cognitive networks \cite{MWScog}, relay networks \cite{MWSrelay}, etc \cite{MWS1,srikantstable,neelynow}.  Unfortunately, MWS is known to be NP-hard for general networks \cite{np,srofnp} and it is not amenable to a distributed implementation.

The high time complexity of the centralized algorithms such as MWS was addressed in the literature by developing low-complexity sub-optimal algorithms. In particular, maximal scheduling is a low-complexity alternative to
MWS that is amenable to parallel and distributed implementation \cite{alon}. However, maximal scheduling
may only achieve a fraction of the achievable rate region \cite{srikanteff}. Greedy Maximal Scheduling (GMS),
also known as Longest-Queue-First (LQF), is another
low-complexity alternative to MWS \cite{greedgood} with a complexity that grows linearly with the
total number of the links \cite{birand}. Its performance
has been observed to be close to optimum in a variety of
wireless network scenarios \cite{srofimperfect}. Although these algorithms reduce the complexity of MWS, they have two main shortcomings. First, they require network wide queue length information exchange, and second, they are mainly suitable for links with time-invariant properties.

Another class of
distributed scheduling policies, called Queue-length-based Random
Access Scheduling policies, use local message exchanges to
resolve the contention problem \cite{srikantQ,srofQ}. By adjusting each link's
contention probability using the link's local queue information,
it provides explicit tradeoffs between efficiency,
complexity, and the contention period. Carrier-Sensing-Multiple-Access (CSMA)-based scheduling
policies \cite{csma,qcsma}, reduce the complexity by simplifying the comparison process, by
exploiting carrier-sensing. Nonetheless, these
results indicate that good throughput performance may be
attained for time invariant channels using algorithms with
very low complexity.

In practice, however, most wireless systems have time varying characteristics. When link states vary over time, the system throughput can be improved by scheduling those links with better states. This is
known as the opportunistic gain \cite{srofgain}. However, many of the low-complexity scheduling algorithms in the literature do not exploit the opportunistic gain, and their
performance in time varying channels are shown to be significantly lower \cite{joo,srofjoo,yun}.

Recently, a few other
low-complexity schemes have been proposed that are provably efficient over time variable channels \cite{joo,srofjoo}. However, these algorithms, depending on the network structure, require either local or global queue length and channel state information (CSI) exchange. In Opportunistic ALOHA \cite{aloha1,aloha2} the transmission probability is allowed to be a function of the channel state information and maximum throughput of the system is achieved.  Distributed Opportunistic Scheduling (DOS) \cite{dos1,dos2} involves a process of joint channel probing and distributed
scheduling. The authors show that the
optimal scheme for DOS turns out to be a pure threshold policy,
where the rate threshold can be obtained by solving a fixed-point
equation. However both Opportunistic ALOHA and DOS are designed under the assumption of saturated queues and their performance have no guarantees for unsaturated queues. 

The use of relay nodes has also been investigated in the context of cellular networks. In \cite{cell-cent} a throughput optimal centralized downlink scheduling  policy is developed for multi-hop relaying in a cellular network. Note that the uplink scheduling is more challenging, since the queue length information is not readily available at the base station. Conventional cellular architecture treats the mobile users as simple transceivers and due to this rigid client-server scheme, they are fully commanded by cellular base stations. In future cellular architectures, mobile users will be given more freedom, and thus, they will be treated as local micro-operators to improve network coverage \cite{tas-cell}. Device-to-Device (D2D) communication is another paradigm which appears to be a promising component in the next generation cellular networks \cite{d2d-survey}. D2D communication was proposed in \cite{d2d-relay} to enable multi-hop relaying in cellular networks. The potential of D2D communications for improving spectral efficiency of cellular networks was also identified in \cite{d2d-mulhop1,d2d-mulhop2}. There are numerous distributed D2D policies \cite{flash,auction1,auction2,auction3}, addressing the data relay services enabled by existing devices in the network. However, most prior works in the literature ignore the network stability and do not investigate throughput optimality.

\subsection{Main results and organization}
\label{sec:MainResultsAndOrganization}
In this work, we take both channel state and queue length information into account when scheduling the source and relay nodes. We propose a distributed CSMA based scheduling algorithm which is analytically shown to be the throughput optimal. The time complexity of the algorithm is shown to be linear in the number of nodes. 
The rest of the paper is organized as follows. In Section \ref{sec:SystemModel}, we introduce the relay network model. In Section \ref{sec:DistAlg}, we give a brief introduction to MWS and later propose our scheduling and contention resolution algorithms. In Section \ref{sec:ThroughputOptimality}, we analytically prove throughput optimality of our algorithm. We establish the achievable rate region for a source and a single relay node  in Section \ref{sec:CapacityRegionOfTheCooperativeNetwork} and finally, we evaluate the performance of
different scheduling algorithms via simulations in Section \ref{sec:Validnum}.

% You must have at least 2 lines in the paragraph with the drop letter
% (should never be an issue)
\section{System Model}
\label{sec:SystemModel}
In this section, we introduce the details of our system model and the definitions of the main concepts that are used in relation to this model. Throughout the paper, the following notations
are adopted. Upper case and lower case bold symbols denote
matrices and vectors, respectively. Upper case calligraphic symbols denote sets. 
%The relative complement of set $\mathcal B$ in set $\mathcal A$ is denoted by $\mathcal A \backslash \mathcal B$.
 
%$\boldsymbol{A} \geq 0$ denotes that the matrix $\boldsymbol{A}$ is Hermitian positive semidefinite.
%$|a|$ denotes absolute value of the complex number $a$. $\| a \| = \sqrt{aa^T}$ denotes Euclidean norm of the vector or number $a$. 
\subsection{Network Model}
\begin{figure}
	\centering
		\includegraphics[scale=.5]{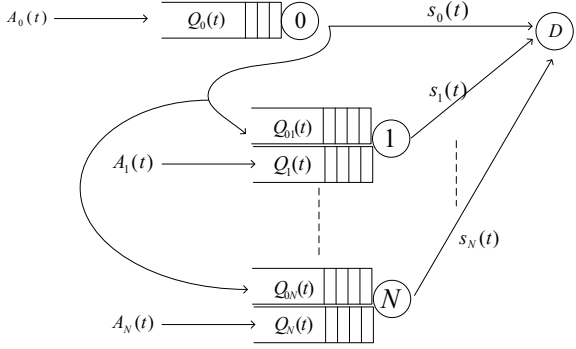}
	\caption{Relay Network. Channels between nodes and destination are ON-OFF whereas the channels between all pairs of nodes are always ON.}
	\label{sysmod}
\end{figure}
We consider a wireless relay network consisting of a source node and $N$ relay nodes all wishing to communicate with a common destination node as shown in Figure \ref{sysmod}. We consider a time-slotted system where the time slot is the resource to be shared among different nodes. We adopt a non-interference model where only one node is transmitting at a given time\footnote{This assumption is widely used in the literature, e.g., it is proposed that the cellular spectrum is utilized for both D2D and cellular communications (i.e., underlay inband D2D) \cite{d2d-survey}.}. The length of the time slot is equal to the transmission time of a single packet.

We assume that the channels between $N+1$ nodes and the destination node are time-varying ON-OFF channels, where a packet is either successfully decoded at the destination or lost with a certain probability. %Thus, the uplink channels of the nodes to the destination are assumed to be time-varying, where the channel is at one of two states, i.e., ON or OFF.
The nodes with better channel conditions to the destination have higher successful transmission probabilities. We will refer to the source node as node $0$ and $N$ relay nodes as node $i$, $i = 1,\ldots,N$. We assume that relay nodes are in close proximity to node $0$ so that the channels between node $0$ and nodes $i=1,\ldots,N$ are always ON and node $0$ can always transmit successfully to any node $i=1,\ldots,N$. This assumption is realistic in practical scenarios (e.g., see \cite{lterelay}), due to proximity of source node and relays.
%Depending on the location of node $0$, relay nodes and the destination, this assumption is realistic in practical scenarios, e.g., see \cite{lterelay}.}

%\footnote{Note that in practice, the channel state process may be correlated across the time (slow fading channels) or nodes (when nodes are close to each other). However, we will see that the results established in this paper does not rely on the statistics of the channel and thus it can be easily extended to channels with time correlated or spatial correlated statistics.}

Let $s_{i}(t)\in\{0,1\}$ be the channel state between node $i$, $i=0,1,\ldots,N$ and the destination at time slot $t$.  The random channel states, $s_{i}(t)$, are assumed to be independent and identically distributed (iid) across time and nodes. Let $\rho_{i}$ be the probability that the channel between node $i$ and the destination is ON, i.e., $s_{i}(t)=1$\footnote{$\rho_i= \mathsf{Pr}\left[\frac{\left|h_i\right|^2P}{N_0}>\gamma\right]$, where $h_i$ is the channel gain from node $i$ to the destination node, $N_0$ is the
noise power level at the destination and $\gamma$ is the Signal-to-Noise
Ratio (SNR) threshold required for correct decoding.}.
% {\color{red}      
% \begin{align}
% \rho_{i} = \mathsf{Pr}\left[\frac{\left|h_i\right|^2P}{N_0}>\gamma\right]\nonumber,
% \end{align} 
% } 
% %
% {\color{red}where $h_i$ is the channel gain from node i to destination, $N_0$ is the
% noise power level at the destination and $\gamma$ is the Signal-to-Noise
% Ratio (SNR) threshold required for correct decoding. Whenever the received SNR is below $\gamma$, we assume that channel is OFF and ON otherwise.}

Let $A_{i}(t)$ be the number of packets arriving to node $i=0,1,\ldots,N$ at time $t$. The arrival processes are assumed to be iid across time and nodes. Let $\lambda_{i}=\mathsf{E}\left[A_{i}(t)\right]$ be the rate of arrival to node $i$. We also assume that $A_{i}(t)\leq A_{max}$ and $\mathsf{E}\left[A_{i}(t)^{2}\right]$ is finite for all $i$ and $t$. The incoming packets are stored in a queue until they are finally transmitted.

Let $Q_{i}(t)$ be the size of the queue storing the packets of node $i=0,1,\ldots,N$ at time $t$. Meanwhile, each node $i =1,\ldots,N$ also keeps a separate queue for node $0$ packets that they are supposed to relay on its behalf.  Let $Q_{0i}(t)$, $i=1,2,\ldots,N$ be the size of this queue at time $t$.

Finally, we assume that the transmission rate of each node is the same and equal to one unit sized packet per time slot.

\subsection{The Relaying Scheme}
At the beginning of time slot $t$, the channel is acquired by one of the nodes according to the scheme to be explained in the subsequent section. If node $0$ acquires the channel, it sends its packet to the destination if $s_0(t)=1$, i.e., the channel between node $0$ and destination is ON.  Otherwise, it forwards its packet to one of the nodes $1,\ldots,N$ for future delivery of the packet to the destination via the selected node $i$, $i=1,2,\ldots,N$. Node $0$ forwards the packet to node $i^{*}$, where  $i^{*} = \operatorname{arg\,max}_j \left\{Q_{0}(t)-Q_{0j}(t)\right\}$.
% The SU which receives the packet from PU takes the responsibility of delivering that packet to the destination at the subsequent slots. 
%We assume that SUs are at close ranges to PU, so that the channel between PU and SUs are always ON, and SUs receive packets without any error. 
%We assume that the packet transmitted from PU to the SU includes the ID of the chosen SU, so that multiple SUs don't receive the same packet. Upon cooperation of SUs, PU allows SUs to operate on its channel.  
Meanwhile, if a node other than node $0$ acquires the channel in a time slot, it may transmit either one of its own packets or a packet received from node $0$ in previous time slots. Node $i$, $i=1,\ldots,N$ transmits a packet from one of its two queues, i.e., $Q_{0i}(t)$ or $Q_i(t)$, with the largest backlog. The destination sends an Acknowledgment (ACK) message after the reception of the packet.  ACK message is perfectly received by both node $0$ and the transmitting node $i$, $i=1,\ldots,N$, and thus, node $0$ is cognizant of the delivery of its packet by the node $i$, $i=1,2,\ldots,N$. Note that the assumption of perfect reception of ACK/NACK is not necessary. It can be shown in a similar way as was done in \cite{srikantstable} that as long as the queues are updated periodically with period $T>1$, our results still hold.

Let $x(t)=\mathfrak{I},0,1,\ldots,N$ be the scalar denoting the scheduled node for transmission, where $\mathfrak{I}$ denotes the \emph{Idle} state indicating no nodes are scheduled for transmission. Based on the aforementioned relaying scheme, the queues evolve as follows:
\begin{align}
Q_{0}(t+1)=&\max\left[Q_{0}(t)-\mathbbm{1}_{\left\{x(t)=0\right\}},0\right]+A_{0}(t),\label{Q0}\\ 
Q_{i}(t+1)=&\max\left[Q_{i}(t)-\mathbbm{1}_{\left\{x(t)=i\right\}}\cdot\mathbbm{1}_{\left\{Q_{i}(t)\geq Q_{0i}(t)\right\}},0\right]\nonumber\\
&+A_{i}(t),\hspace{2cm}\forall i=1,\ldots, N\label{Qi}\\
Q_{0i^*}(t+1)=&\max\left[Q_{0i^*}(t)-\mathbbm{1}_{\left\{x(t)=i^*\right\}}\cdot\mathbbm{1}_{\left\{Q_{i^*}(t)< Q_{0i^*}(t)\right\}},0\right]\nonumber\\
&+\left(1-s_{0}(t)\right)\mathbbm{1}_{\left\{x(t)=0\right\}}, \nonumber\\
&\hspace{0.2cm}\mbox{ for } i^* = \operatorname{arg\,max}_j \left\{Q_{0}(t)-Q_{0j}(t)\right\},\label{Q0i}\\
Q_{0i}(t+1)=&\max\left[Q_{i}(t)-\mathbbm{1}_{\left\{x(t)=i\right\}}\cdot\mathbbm{1}_{\left\{Q_{i}(t)< Q_{0i}(t)\right\}},0\right]\nonumber\\
&\forall i\neq i^*,
\end{align}
where $\mathds{1}_{\{\cdot\}}$ is the indicator function, and its value is equal to $1$ if the condition inside the bracket is true, and $0$ otherwise. 

%If PU is scheduled in a slot $t$, and if $s_{0}(t)=1$ it transmits a packet directly to the destination. Otherwise, (i.e., $s_{0}(t)=0$) it transmits a packet to a SU $i$, where $i = \operatorname{arg\,max}_j \left\{Q_{0}(t)-Q_{0j}(t),0\right\}$. If SU $i$ is scheduled in a slot $t$, it transmits a packet from the queue (i.e., $Q_{i}$ or $Q_{0i}$) which has the longest backlog. It can be proven that MWS is throughput optimal as in \cite{tas1}. 

%

\section{Distributed Algorithm}
\label{sec:DistAlg}
\subsection{Background and Preliminaries}
\label{sec:BackgroundAndPreliminaries}

Before we proceed with the discussion of our proposed distributed scheduling algorithm, we give a brief overview of the well-known Maximum Weight Scheduling (MWS) \cite{neelynow} and Q-CSMA algorithms \cite{qcsma} in the context of our relay network. 
\subsubsection*{Maximum Weight Scheduling}
According to MWS algorithm, each node is assigned a weight that is the product of its {\em differential} queue length and the channel state. MWS algorithm chooses the node with the maximum weight for transmission at time slot $t$.  The differential queue length of the node is the difference of the queue lengths of the ingress and egress nodes. Note that in our relay network, the destination is the sink node, and thus, its queue length is always zero.  Let $Q_{r}(t)$ be the maximum differential backlog of node $0$ with respect to nodes $1,\ldots,N$ at time slot $t$, which is defined as
\begin{align}
Q_{r}(t) = \max_{i=1,2,\ldots,N} \left\{Q_{0}(t)-Q_{0i}(t),0\right\}.
\end{align}
%such that
%\begin{align}
%\sum_{i\in \mathcal{X}^{*}(t)}\omega_{i}(t)=\max_{\mathcal{X}\in\mathcal{M}}\sum_{i\in \mathcal{X}(t)}\omega_{i}(t).
%\end{align}
In our network model, we assign weights to nodes in the network, and aim to schedule the node (and the queue of that node) with the highest weight at every slot.  The weight of node $i$ at slot $t$, $\omega_{i}(t)$, is defined as follows:
\begin{align}
&\omega_{0}(t) = f_{0}(Q_{0}(t))s_{0}(t)+f_{0}(Q_{r}(t))(1-s_{0}(t)), \label{W0}\\ 
&\omega_{i}(t) = \max\left\{f_{i}(Q_{0i}(t)),f_{i}(Q_{i}(t))\right\}s_{i}(t)\ ,\forall i\neq 0,\label{Wi}
\end{align}
where $f_{i}:[0,\ \infty]\rightarrow[0,\ \infty]$, $i=0,1,\ldots,N$ are functions that should satisfy the following conditions \cite{srikantstable}:
\begin{enumerate}
	\item $f_{i}(Q_{i})$ is a non decreasing, continuous function with $\lim_{Q_{i}\rightarrow\infty}f_{i}(Q_{i})=\infty$.
	\item Given any $M_{1}>0$, $M_{2}>0$ and $0<\epsilon<1$, there exists a $Q<\infty$, such that for all $Q_{i}>Q$ and $\forall i$, we have
	\begin{align}
	(1-\epsilon)f_{i}(Q_{i})\leq f_{i}(Q_{i}-M_{1})&\leq f_{i}(Q_{i}+M_{2})\nonumber\\
	&\leq (1+\epsilon)f_{i}(Q_{i}).
	\end{align}
\end{enumerate}

It can be shown that for this system model, the centralized MWS algorithm is throughput optimal using the Lyapunov drift theorem \cite{neelynow}. However, note that MWS requires the complete instantaneous queue length and channel state information.

\subsubsection*{Queue based Carrier-Sensing-Multiple-Access (Q-CSMA)}
Carrier-Sensing-Multiple-Access (CSMA)-based scheduling policies \cite{csma,qcsma}, reduce the complexity of MWS by simplifying the comparison of the weights exploiting the carrier-sensing mechanism.  In Q-CSMA, each time slot $t$ is divided into a \emph{contention} and a \emph{data} 
slot. For any node $i$, let $\mathcal{C}(i)$ be the set of
conflicting nodes (called conflict set) of node $i$, i.e., $\mathcal{C}(i)$ is the set of
nodes such that if any one of them is active, then node $i$ cannot be active. In the contention slot, the network first selects a set of nodes that do not conflict with each other,
denoted by $\mathcal{D}(t)$. Note that this set of nodes is also a feasible
schedule, but this is not the schedule to be used for data transmission. $\mathcal{D}(t)$ is called the \emph{decision schedule} at time $t$.

The network selects a decision schedule according to a randomized procedure, i.e., it selects $\mathcal{D}(t)\in \mathcal{M}$
with  probability $\alpha(\mathcal{D}(t))$, where $\sum_{\mathcal{D}(t)\in \mathcal{M}}\alpha(\mathcal{D}(t))=1$, and $\mathcal{M}$ is the set of all possible feasible schedules of the network. Then, the scheduling procedure proceeds as follows: Each node within
$\mathcal{D}(t)$ is checked to decide whether it will be
included in the transmission schedule $x(t)$. For any $i\in\mathcal{D}(t)$ if no nodes in $\mathcal{C}(i)$ were active in the previous data slot, i.e., $\forall j \in \mathcal{C}(i), x(t-1)\neq j$, then $i$ chooses to be \emph{active} with probability $p_{i}$ and inactive with probability $1-p_{i}$ in the current data slot. If at least one node in $\mathcal{C}(i)$
was active in the previous data slot, i.e., $\exists j\in \mathcal{C}(i), x(t-1)= j$, then $i$ will be inactive
in the current data slot. Any node $i\notin\mathcal{D}(t)$
maintains its state (active or inactive) from the previous
data slot.

\subsection{Relay Q-CSMA Algorithm (RQ-CSMA)}
\label{sec:AlgorithmDescription}

We propose RQ-CSMA algorithm to obtain a feasible throughput optimal schedule, when the channels to the destination are randomly varying over time.
In RQ-CSMA, a time slot is divided into a contention and a transmission slot as in Q-CSMA. Unlike Q-CSMA, the purpose of the contention slot in RQ-CSMA is not only to generate a collision free schedule to be used in the transmission slot but also to infer the channel states of every node, i.e., $\textbf{s}(t)=\left(s_{0}(t),\ s_{1}(t),\ldots,\ s_{N}(t)\right)$. 

%First, a set of nodes, say $\mathcal{D}$ which are allowed to change their transmission states, are chosen. All other nodes are not allowed to change their transmission states. Although we do not use $\mathcal{D}$ as a schedule, we assume that $\mathcal{D}$ itself is a valid schedule. We call $\mathcal{D}$ a decision schedule.

Let us denote by $\mathcal{D}^{\boldsymbol{\vartheta}}(t)$, the decision schedule at time slot $t$ with $\textbf{s}(t)=\boldsymbol{\vartheta}$, where $\boldsymbol{\vartheta}=\left\{\vartheta_{0},\ \vartheta_{1},\ldots,\ \vartheta_{N}\right\}$, and $\vartheta_{i}\in \{0,1\}$ is a realization of randomly varying channel states.  Note that in the relay network considered, the cardinality of set $\mathcal{D}^{\boldsymbol{\vartheta}}(t)$ is always one, since only one node can be scheduled without conflicting with others.  The network selects a decision schedule $\mathcal{D}^{\boldsymbol{\vartheta}}(t)$ randomly with probability $\alpha\left(\mathcal{D}^{\boldsymbol{\vartheta}}(t)\right)$, where $\sum_{\mathcal{D}^{\boldsymbol{\vartheta}}(t)\in\mathcal{M}^{\boldsymbol{\vartheta}}}\alpha\left(\mathcal{D}^{\boldsymbol{\vartheta}}(t)\right)=1$, and $\mathcal{M}^{\boldsymbol{\vartheta}}$ is the set of all possible feasible schedules of the network when $\textbf{s}(t)=\boldsymbol{\vartheta}$. Note that in the relay network considered, a schedule is feasible, if only one node is scheduled, and the scheduled node $i=0,1,\ldots, N$ has its channel in ON state, i.e., $s_i(t)=1$. Meanwhile, it is always feasible to schedule node $0$, since links between node $0$ and other nodes are always ON.

Unlike Q-CSMA, in RQ-CSMA, the transmission schedule is determined according to the 
channel states. Let $y^{\mathbf{\mathbf{\vartheta}}}(t)$ be the transmission schedule in the most recent data slot (before and including time $t$) when 
$\mathbf{s}(t)=\mathbf{\vartheta}$. More specifically,
\begin{align}
y^{\boldsymbol{\vartheta}}(t)=\left\{x(\tau^{*}):\ \ \tau^{*}=\max \left\{\tau\ \text{s.t.}\ \tau\leq t\ \ \text{and}\ \ \mathbf{s}(\tau)=\boldsymbol{\vartheta}\right\}\right\}.\label{yS}
\end{align}

$y^{\boldsymbol{\vartheta}}$ is the transmission activity of the nodes in the last instance of time, before and including $t$, where the channel state vector realization was $\vartheta$. Consider the following example to better illustrate the evolution of $y^{\boldsymbol{\vartheta}}$. Assume that only the channel of node $0$ is time varying. Since there is only one time varying channel, there are only two variables defined by \eqref{yS}, i.e., $y^1(t)$ and $y^0(t)$. The sample path evolution of $y^1(t)$, $y^0(t)$ and $x(t)$ are given in Fig. \ref{eq9fig}. Suppose that at time $t=8$, RQ-CSMA has chosen to schedule an \textit{idle} slot. Note that since at $t=8$, the channel state is $0$, $y^0(8)$ updates it value to the value of $x(8)$ while $y^{1}(8)$ maintains its value from $t=7$, since it is not allowed to change its value. Basically, when $s_{0}(t)=0$, only $y^0(t)$ evolves, and $y^1(t)$ does not evolve. Note that in scheduling the nodes, if $s_0(t)=1$ ($s_0(t)=0$), RQ-CSMA uses $y^1(t)$ ($y^0(t)$).

%For any $\textbf{s}\in \mathcal{S}$ we associate a transmission schedule vector $\acute{\textbf{y}}^{\textbf{s}}$ which takes the value of $\acute{\textbf{x}}(t)$ if $\textbf{s}(t)=\textbf{s}$ and takes value of  $\acute{\textbf{x}}(\tau)$ otherwise with $\tau$ defined as in (\ref{yS}).
%
%Let $\mathcal{Y}^{\boldsymbol{\vartheta}}$ be the set of feasible transmission schedules when the channel state vector is $\vartheta$.  
At time slot $t$, based on the realization of the channel state vector $\textbf{s}(t)=\boldsymbol{\vartheta}$ a decision schedule, $\mathcal{D}^{\boldsymbol{\vartheta}}(t)$ is randomly determined. Then each node $i$ searches for the corresponding $y$ to the channel state vector $\textbf{s}(t)=\boldsymbol{\vartheta}$, i.e., $y^{\boldsymbol{\vartheta}}$. If node $i\in \mathcal{D}^{\boldsymbol{\vartheta}}(t)$, and $y^{\boldsymbol{\vartheta}}(t-1) \neq j$ for all $j\neq i$, then node $i$ chooses to be active with probability $p^{\boldsymbol{\vartheta}}_{i}$, and idle with probability $1-p^{\boldsymbol{\vartheta}}_{i}$. If $y^{\boldsymbol{\vartheta}}(t-1) = j$ for any $j\neq i$, then $i$ will be inactive in the current transmission slot. Any node $i$ not selected in $\mathcal{D}^{\boldsymbol{\vartheta}}(t)$ will maintain its previous state, i.e., it will be inactive if $y^{\boldsymbol{\vartheta}}(t-1)\neq i$ and active otherwise. The conditions on activation probabilities $p^{\boldsymbol{\vartheta}}_{i}$ will be specified in Section \ref{PTO}. Note that in Q-CSMA, transmission schedule is determined based on the previous transmission schedule $x(t-1)$, whereas in RQ-CSMA we use $y^{\boldsymbol{\vartheta}}(t-1)$ for each $\boldsymbol{\vartheta}$ to determine the transmission schedule $x(t)$. In Section \ref{PTO}, We will show that defining $y^{\boldsymbol{\vartheta}}$ as such, and using it to schedule the transmissions, results in Markov chains with product-form stationary distributions. This is then used to prove throughput optimality of RQ-CSMA. The detailed description of RQ-CSMA is given in Algorithm \ref{alg}.

\begin{algorithm}
\caption{RQ-CSMA}\label{alg}
\begin{text}
At each time slot, each node $i$ performs the following procedure.
\end{text}
\begin{algorithmic}[1] 
\State In \textit{control} slot, randomly select a decision schedule $\mathcal{D}^{\boldsymbol{\vartheta}}(t)\in \mathcal{M}^{\boldsymbol{\vartheta}}$ with probability $\alpha\left(\mathcal{D}^{\boldsymbol{\vartheta}}(t)\right)$ (node $i$ contends to be included in the decision schedule.),
\If{$i\in \mathcal{D}^{\boldsymbol{\vartheta}}(t)$ and $y^{\boldsymbol{\vartheta}}(t-1)\neq j$ for all $j\neq i$}
\State $x(t) = i$ with probability $p^{\boldsymbol{\vartheta}}_{i}$
\State $x(t) = \mathfrak{I}$ with probability $1-p^{\boldsymbol{\vartheta}}_{i}$
\ElsIf{$i\in \mathcal{D}^{\boldsymbol{\vartheta}}(t)$ and $y^{\boldsymbol{\vartheta}}(t-1)=j$ for some $j\neq i$}
\State node $i$ does not transmit.
\Else
\State node $i$ transmits if $y^{\boldsymbol{\vartheta}}(t-1)=i$, otherwise node $i$ does not transmit\ \   %(i.e., $i\notin \mathcal{D}^{\boldsymbol{\vartheta}}(t)$)
\EndIf
\State In the \textit{data} slot, use $x(t)$ as the transmission schedule
\end{algorithmic}
\end{algorithm}

\begin{figure}
        \centering
        \begin{subfigure}[t]{0.3\textwidth}
                \includegraphics[width=\textwidth]{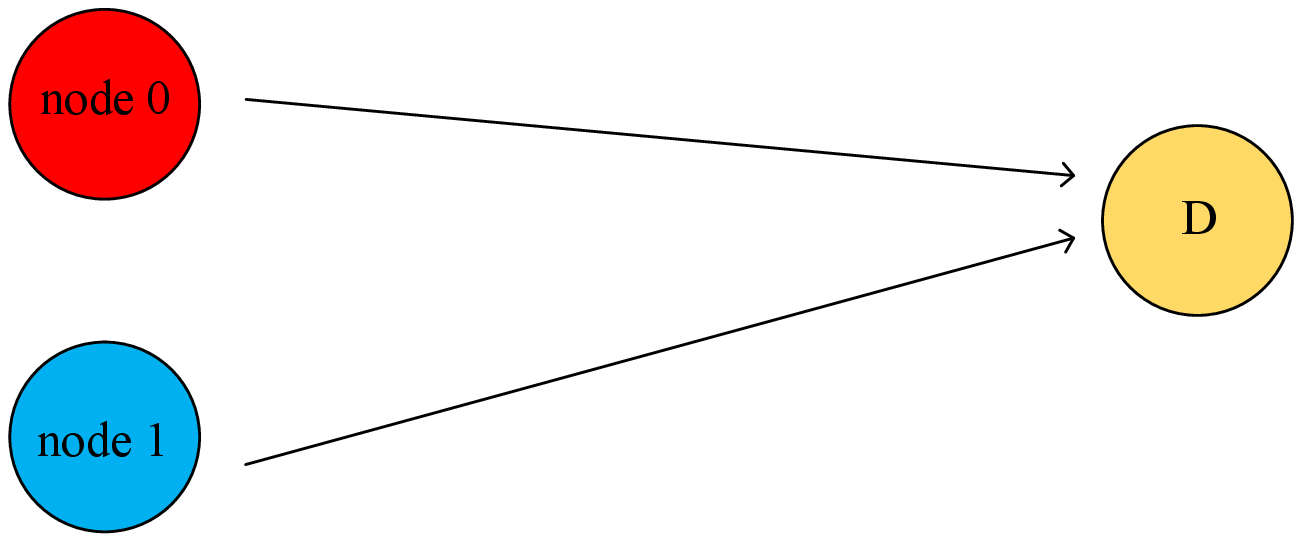}
                \caption{Network topology when $s_{0}(t)=1$}
								\label{systop1}
        \end{subfigure}%
        \hfill
        \begin{subfigure}[t]{0.3\textwidth}
                \includegraphics[width=\textwidth]{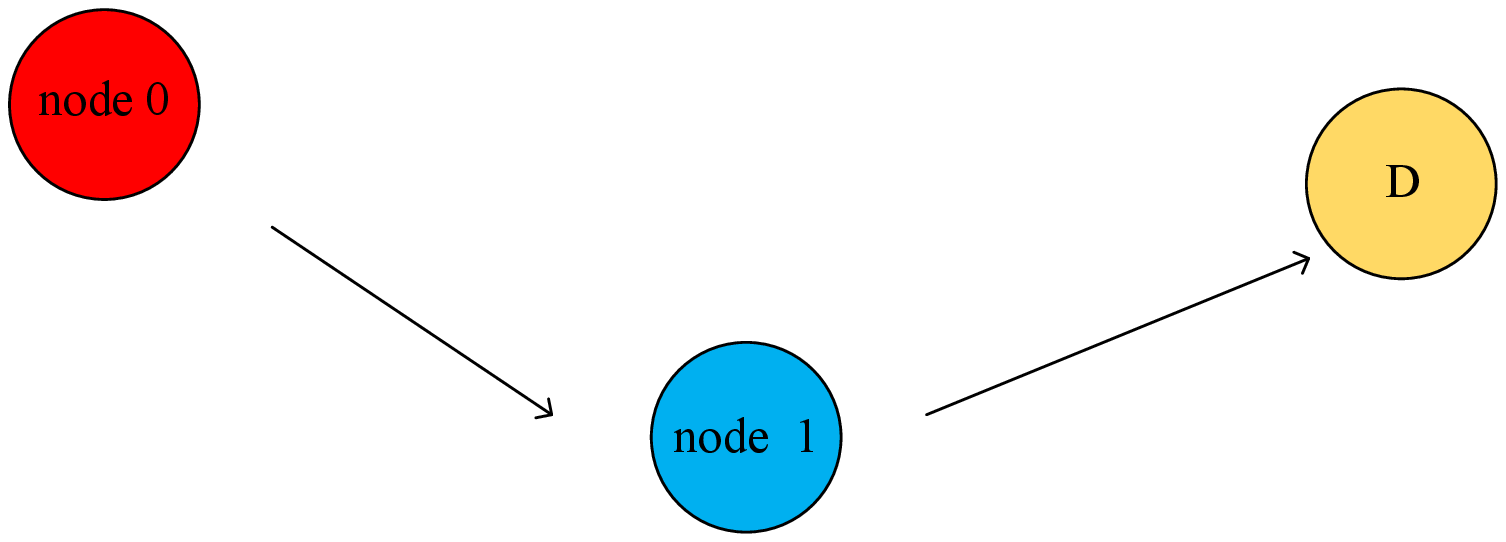}
                \caption{Network topology when $s_{0}(t)=0$}
                \label{systop0}
        \end{subfigure}
				\caption{Different network topologies associated with $s_{0}(t)$ }\label{systop}
\end{figure}

\begin{figure*}
	\centering
		\includegraphics[scale=.8]{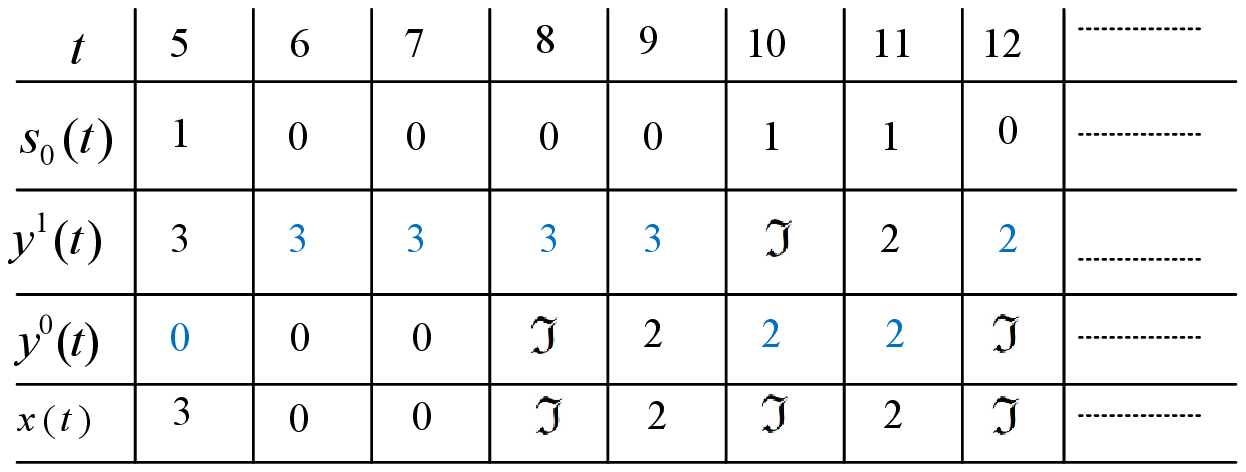}
	\caption{The evolution of $y^1(t)$, $y^0(t)$ and $x(t)$. Numbers in black mean that the schedule is allowed to evolve, while the numbers in blue mean that it remains the same.}
	\label{eq9fig}
\end{figure*}

We illustrate the operation of RQ-CSMA with the following simple example. 
\subsubsection*{Example 1}

Suppose that the network consists of a source and a relay node with only node $0$ having a time varying channel, i.e., $\textbf{s}(t)=s_{0}(t)$. Note that whenever $s_{0}(t)=1$, the network consists of two nodes willing to transmit to a common destination as in Figure \ref{systop1}. In this scenario, the possible decision schedules are $\mathcal{D}^{1}=\left\{\mathfrak{I}\right\}\ \text{or}\ \left\{0\right\}\ \text{or}\ \left\{1\right\}$, which translates into two nodes being idle, or direct transmission of node $0$ to destination or transmission of node $1$ to destination, respectively. Meanwhile, if $s_{0}(t)=0$, the network consists of a node $0$ which forwards its packet to the node $1$ for eventual delivery to the destination as shown in Figure \ref{systop0}. In this scenario, the possible decision schedules are $\mathcal{D}^{0}=\left\{\mathfrak{I}\right\}\ \text{or}\ \left\{0\right\}\ \text{or}\ \left\{1\right\}$, which translates into two nodes being idle, or node $0$ relaying a packet to node $1$ or transmission of node $1$ to destination, respectively. 

Since there is only one time varying channel, we need to define $y^{1}(t)$, the transmission schedule corresponding to the last instance $\tau^*$ that $s_0(\tau^*)=1$ and $y^{0}(t)$, the transmission schedule corresponding to the last instance $\tau^*$ such that $s_0(\tau^*)=0$ according to (\ref{yS}). Note that $y^{1}(t)$ is associated with Figure \ref{systop1} and $y^{0}(t)$ is associated with Figure \ref{systop0}.
 
First consider the case when $s_{0}(t)=1$, $y^{1}(t-1)=\mathfrak{I}$, i.e., in the most recent time slot, $\tau$, when $s_{0}(\tau)=1$, where $\tau<t$, both nodes did not transmit. Then,
	 \begin{itemize}

		 \item if the decision schedule in the current slot is $\mathcal{D}^{1}(t)=\left\{0\right\}$, then node $0$ transmits to the destination with probability $p^{1}_{0}$ and remains idle with probability $1-p^{1}_{0}$.
		\item if the decision schedule is $\mathcal{D}^{1}(t)=\left\{1\right\}$, then node $1$ transmits to the destination with probability $p^{1}_{1}$, and it remains idle with probability $1-p^{1}_{1}$.
	 \end{itemize}
	
Now, consider the other possibility for the channel state of node $0$, i.e., when $s_{0}(t)=0$, we have $y^{0}(t-1)=1$, which means that in the most recent time slot, $\tau$, when $s_{0}(\tau)=0$, node $0$ was idle whereas node $1$ transmitted. Then,

		 \begin{itemize}
		
		 \item if the decision schedule in the current slot is $\mathcal{D}^{0}(t)=\left\{0\right\}$, then node $0$ does not transmit, because the last time when the node $0$ to destination channel was OFF, node $1$ was active.
		\item if the decision schedule in the current slot is $\mathcal{D}^{0}(t)=\left\{1\right\}$, then node $1$ transmits to the destination with probability $p^{0}_{1}$, and it remains idle with probability $1-p^{0}_{1}$.
	 \end{itemize}

%Additional possible cases arising in this example are omitted for brevity.
	
	%For each channel state vector $\textbf{s}\in \mathcal{S}$ we define an aggregate DTMC as follow:
%\begin{align}
%\text{DTMC}\ \ Y^{\textbf{s}}:=(\acute{\textbf{y}}^{\textbf{s}}(t))
%\end{align}
%
%For instance, consider a single PU and SU with only PU having a time varying channel to the destination i.e., $\textbf{s}(t)=\left\{s_{0}(t)\right\}$. 

\subsection{Contention Resolution Scheme}
\label{sec:Contention}

%\footnote{One possible way for each node $i$ to acquires its channel state is the following. At the beginning of each time slot, the destination broadcasts a control/probing packet and nodes learn their channel realization by measuring the packet's power. }

In this section, we propose a contention resolution scheme which satisfies two purposes: i) Randomly choosing a feasible schedule, and ii) Inferring the channel state vector $\textbf{s}(t)$ at each node in a \emph{distributed manner}. At the beginning of contention slot, the destination node broadcasts a probe packet.  Upon successful detection of this packet, node $i$ determines that its channel to the destination node is ON at slot $t$, and otherwise, it recognizes that its channel is OFF. We assume that the contention slot is divided into  $N+1+W$ number of mini slots. Each node is registered to one of the first $N+1$ mini slots, i.e., we assign $i^{th}$, $i=0,\ldots, N+1$, mini slot to $i^{th}$ node. At $i^{th}$ mini slot, node $i$ broadcasts a random signal, and all other nodes attempt to detect the random message. If they detect a message on $i^{th}$ mini-slot, they conclude that $s_i(t)=1$, otherwise $s_i(t)=0$.  Hence, at the end of $N+1$ mini-slots, all nodes are aware of other nodes' channel states to the destination.  

To randomly choose (i.e., with probability $\alpha\left(\mathcal{D}^{\boldsymbol{\vartheta}}(t)\right)$ for any $\boldsymbol{\vartheta}\in\mathcal{S}$) a decision schedule $\mathcal{D}^{\boldsymbol{\vartheta}}(t)$, node $i$ (given that $s_i(t)=1$) randomly selects a number $T_{i}$ uniformly distributed in $\left[N+1,N+1+W\right]$, and waits for $T_{i}$ mini slots. If node $i$ hears an INTENT message before $T_{i}+1$, it will not be included in $\mathcal{D}^{\boldsymbol{\vartheta}}(t)$.  Otherwise, node $i$ broadcasts an INTENT message at the beginning of the $(T_{i}+1)^{th}$ control mini-slot. If there is no collision in $(T_{i}+1)^{th}$ control mini-slot, node $i$ will be included in $\mathcal{D}^{\boldsymbol{\vartheta}}(t)$.  Finally, if there is a collision, none of the nodes will be included in $\mathcal{D}^{\boldsymbol{\vartheta}}(t)$ (i.e., $\mathcal{D}^{\boldsymbol{\vartheta}}(t)=\left\{\mathfrak{I}\right\}$).

Note that the contention resolution scheme requires all nodes to hear each other which may not be always feasible. This is similar to well-known {\em hidden node} problem in wireless networks which can be resolved by Request-To-Decide (RTD) and Clear-To-Decide (CTD) mechanism.
The INTENT message in our original scheme is split into RTD and CTD pair transmitted and received, respectively, in two successive sub-mini slots. In the first sub-mini slot, node $i$, $i=1,\ldots,N$ sends RTD to the destination node. If the destination receives RTD without a collision (i.e., no other nodes are transmitting in the same sub-mini slot), then the destination replies with CTD to node $i$ in the second sub-mini slot. If node $i$ receives the CTD from destination without a collision, then node $i$ is added to the decision schedule. 
%\footnote{Note that RTS/CTS should be employed judiciously, since its use may, in practice, introduces large overheads when there is a large number of nodes or the data packet size is small \cite{RTD1}. Moreover, as the number of the nodes increases, number of RTS/CTS collisions also increases. Since in the event of RTS/CTS collisions, no interfering node is allowed to transmit, the throughput of the network degrades significantly \cite{RTD2}.}. 

%For the acquisition of the channel state process, each node $i$ sends its channel state to the destination (this is achieved by control/probing packet) in its assigned mini slot and at the end of the $(N+1)^{th}$ mini slot, the destination broadcasts the  channel state process $\textbf{s}(t)$ to all the nodes.

%However, this does not affect the RQ-CSMA since in the event of a collision in the contention period, RQ-CSMA allows the active nodes in the past instances of time to maintain their state. 

\section{Throughput Optimality of RQ-CSMA}
\label{sec:ThroughputOptimality}

%\begin{figure}
%	\centering
%		\includegraphics[scale=.7]{Pictures/DTMCex1.pdf}
%	\caption{DTMC with the vector of transmission schedule $\textbf{x}$ as states for the the case of single PU and SU where only PU's channel is time varying}
%	\label{DTMCex1}
%\end{figure}
\begin{figure}
	\centering
		\includegraphics[scale=.13]{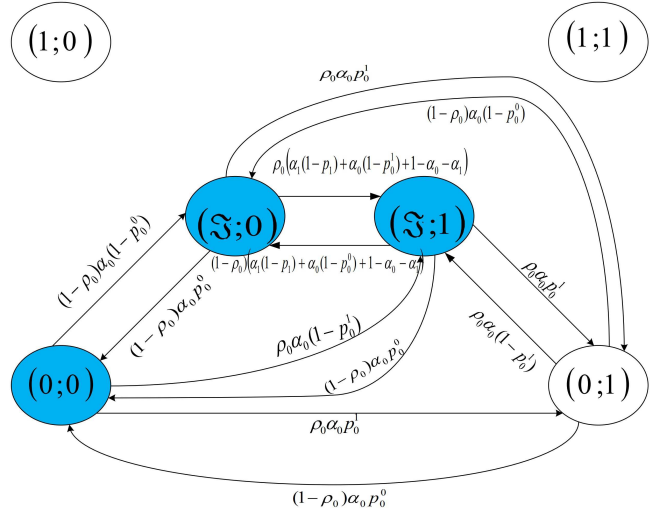}
	\caption{DTMC with ($x;s_{0}$) as states for {\em Example 1}.}
	\label{DTMCex2}
\end{figure}
%Let $\mathcal{Y}^{\boldsymbol{\vartheta}}$ be the set of transmission schedules corresponding to $\textbf{y}^{\boldsymbol{\vartheta}}$. The $i^{th}$ element of $\textbf{y}^{\boldsymbol{\vartheta}}$ is equal to 1 (i.e., $y^{\boldsymbol{\vartheta}}_{i}=1$) only if $i\in \mathcal{Y}^{\boldsymbol{\vartheta}}$. 

In this section, we first discuss why a direct application of Q-CSMA is not possible for our relay network.  Then, we explain how we can modify the standard analysis of throughput optimality according to the structure of our proposed RQ-CSMA algorithm.  Finally, we provide the proof of throughput optimality of RQ-CSMA when the channel states are determined correctly according to the contention algorithm discussed earlier. The key notations used in this paper are listed in Table \ref{notation}.

\begin{table}[]
\centering
\caption{SUMMARY OF NOTATION}
\label{notation}
\begin{tabular}{|l|l|}
\hline
Notation                                       & Description                                                                                                                                                            \\ \hline
$s_{i}(t)$                                     & Channel state between node $i$ and the destination at time $t$,                                                                                                        \\ \hline
$A_{i}(t)$                                     & Number of packets arriving to node $i$ at time $t$,                                                                                                                    \\ \hline
$Q_{i}(t)$                                     & Size of the queue, storing the packets of node $i$ at time $t$,                                                                                                         \\ \hline
$Q_{0i}(t)$                                    & Relaying queue of node $i$ at time $t$,                                                                                                                                \\ \hline
$x(t)$                                         & Scheduled node at time $t$,                                                                                                                                             \\ \hline
$\mathcal{D}(t)$                               & The decision schedule,                                                                                                                                                 \\ \hline
$\mathcal{D}^{\mathbf{\mathbf{\vartheta}}}(t)$ & The decision schedule when $\mathbf{s}(t)=\mathbf{\vartheta}$,                                                                                                         \\ \hline
$y^{\mathbf{\mathbf{\vartheta}}}(t)$           & \begin{tabular}[c]{@{}l@{}}Transmission schedule in the most recent data slot \\ (before and including time $t$) when $\mathbf{s}(t)=\mathbf{\vartheta}$,\end{tabular} \\ \hline
$p_i^{\mathbf{\mathbf{\vartheta}}}(t)$         & Activation probability of node $i$ when $\mathbf{s}(t)=\mathbf{\vartheta}$,                                                                                          \\ \hline
$\omega_x(t)$                                  & Weight associated with scheduled node $x$ at time $t$.                                                                                                                \\ \hline
\end{tabular}
\end{table}
\subsection{Drawback of Q-CSMA}
Even though Q-CSMA is known to achieve a throughput-optimal schedule in a distributed fashion for a wireless network with non-time-varying links, it is not a suitable choice for many other network models. A particular case was studied in \cite{qcsmaeylem}, where the authors investigated a random access scheme for scheduling nodes in a CRN.  The authors in \cite{qcsmaeylem} demonstrated that Q-CSMA is not throughput optimal in this network model, since the Discrete Time Markov Chain (DTMC) with the transmission schedule $x(t)$ chosen as the state is not time-reversible, and thus, it does not have a product-form solution.  

Similarly, Q-CSMA is not a suitable choice for our system model as well. Note that Q-CSMA is throughput optimal when the activation probabilities for each node are chosen to be a function of the node weights as defined in (\ref{W0}) and (\ref{Wi}) of \cite{qcsma}. It can be argued that the stationary distribution of DTMC exists in steady state, if the activation probabilities change slowly over time. Unfortunately, in our system model, the node weights may vary abruptly  depending on the channel state. Specifically, in {\em Example 1}, whenever $s_{0}(t)=1$, the weight of node $0$ is $f_{0}(Q_{0}(t))$ and it becomes $f_{0}(Q_{0}(t)-Q_{01}(t))$ otherwise. Consider the case where $Q_{0}(t)$ and $Q_{0i}(t)$ are both very large and equal. In this case, the activation probability of node $0$ is approximately equal to $1$ when $s_{0}(t)=1$ and $0$ otherwise. This results in a transition probability jumping from $1$ to $0$ from one slot to the next. This implies that the distribution of DTMC cannot be assumed to be at steady state at a given time\footnote{Henceforth, Proposition 2 in \cite{qcsma} which assumes that the DTMC is in steady state at any given time does not hold for our network model.}. Moreover, the DTMC in our system model is not time-reversible either. 

\subsection{Modifying the DTMC with channel state information}
Next, we demonstrate that associating the channel state of the node $0$ into the states cannot resolve this problem. 
For this purpose, we again consider the network in {\em Example 1}.  Let $(x;s_{0})$ be the states of the new DTMC as shown in Figure \ref{DTMCex2}\footnote{For the sake of simplicity, we only show the transition probabilities of the node $0$.}.  Consider the highlighted states in Figure \ref{DTMCex2}, and let us check whether the DTMC is time reversible by
examining the transitions from $(0;0)$, $(\mathfrak{I};0)$ to $(\mathfrak{I};1)$ in the clockwise and counter-clockwise directions. Let $\alpha_i$, $i=0,1$, be the probability that node $i$ is included in the decision schedule. The product
of clockwise transition probabilities is $\rho_{0}(1-\rho_{0})^{2}\alpha_{0}^{2}p^{0}_{0}(1-p^{0}_{0})\left(\alpha_{1}(1-p_{1})+\alpha_{0}(1-p^{1}_{0})+1-\alpha_{0}-\alpha_{1}\right)$ and the product of counter-clockwise transition probabilities is $\rho_{0}(1-\rho_{0})^{2}\alpha_{0}^{2}p^{0}_{0}(1-p^{0}_{0})\left(\alpha_{1}(1-p_{1})+\alpha_{0}(1-p^{0}_{0})+1-\alpha_{0}-\alpha_{1}\right)$. Since, $p^{1}_{0}\neq p^{0}_{0}$, these two probabilities are not equal and therefore the DTMC is not time reversible by Kolmogorov$'$s criterion \cite{Kelly}.

%Another convincing example is when channel state of SU is also time varying. For the same example above assume that both channel states $s_{0}(t)$ and $s_{1}(t)$ is time varying. Suppose that at the current time slot $t$, $s_{1}(t)=0$. In this case, state $(0,\ 1)$ makes a transition into state $(0,\ 0)$ with probability one while there is no transition from $(0,\ 0)$ to $(0,\ 1)$. Hence DTMC cannot be time reversible.

\begin{figure}
        \centering
        \begin{subfigure}[t]{0.3\textwidth}
                \includegraphics[width=\textwidth]{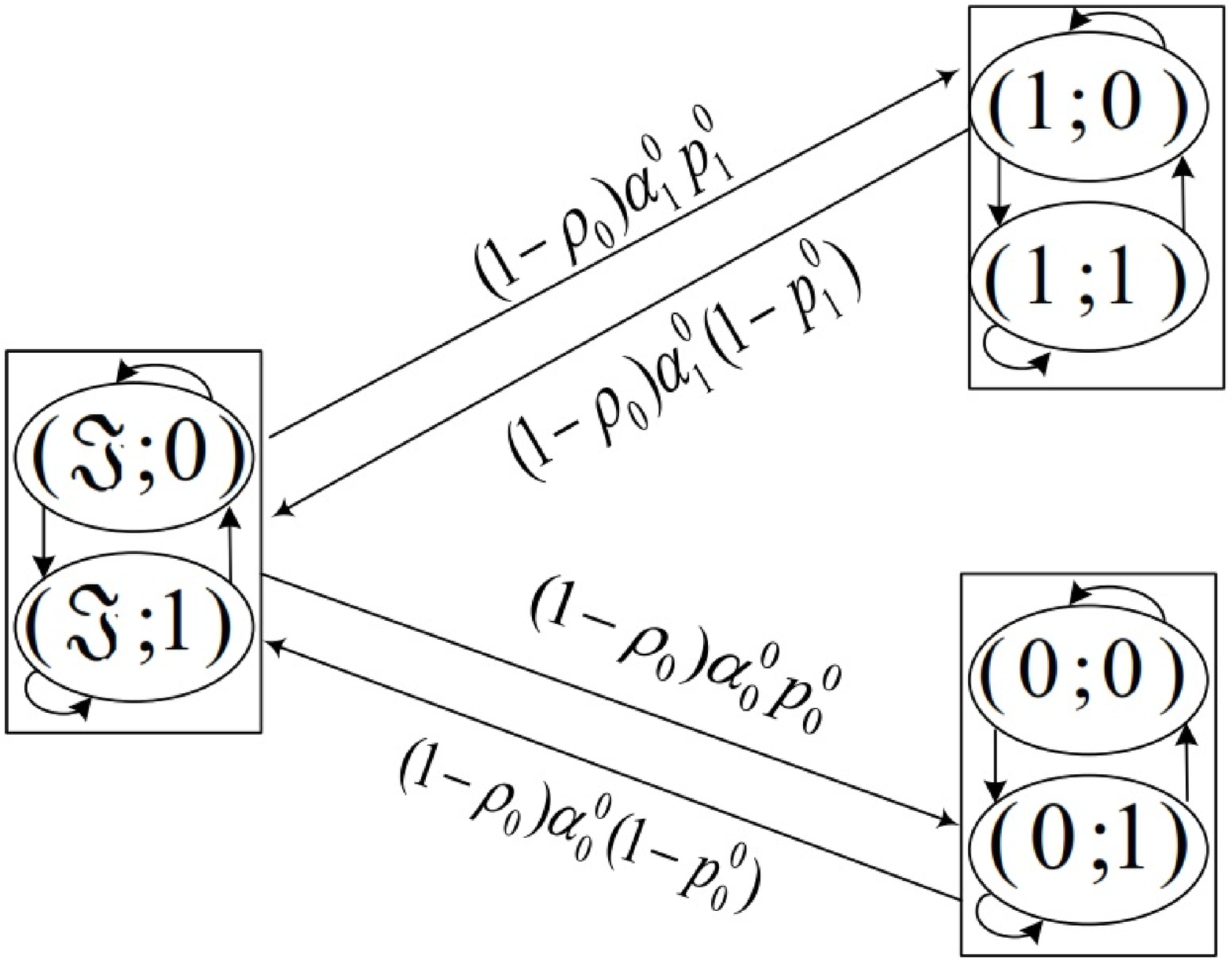}
                \caption{DTMC$^{0}$.}
                \label{markov0}
        \end{subfigure}%
        \hfill
        \begin{subfigure}[t]{0.3\textwidth}
                \includegraphics[width=\textwidth]{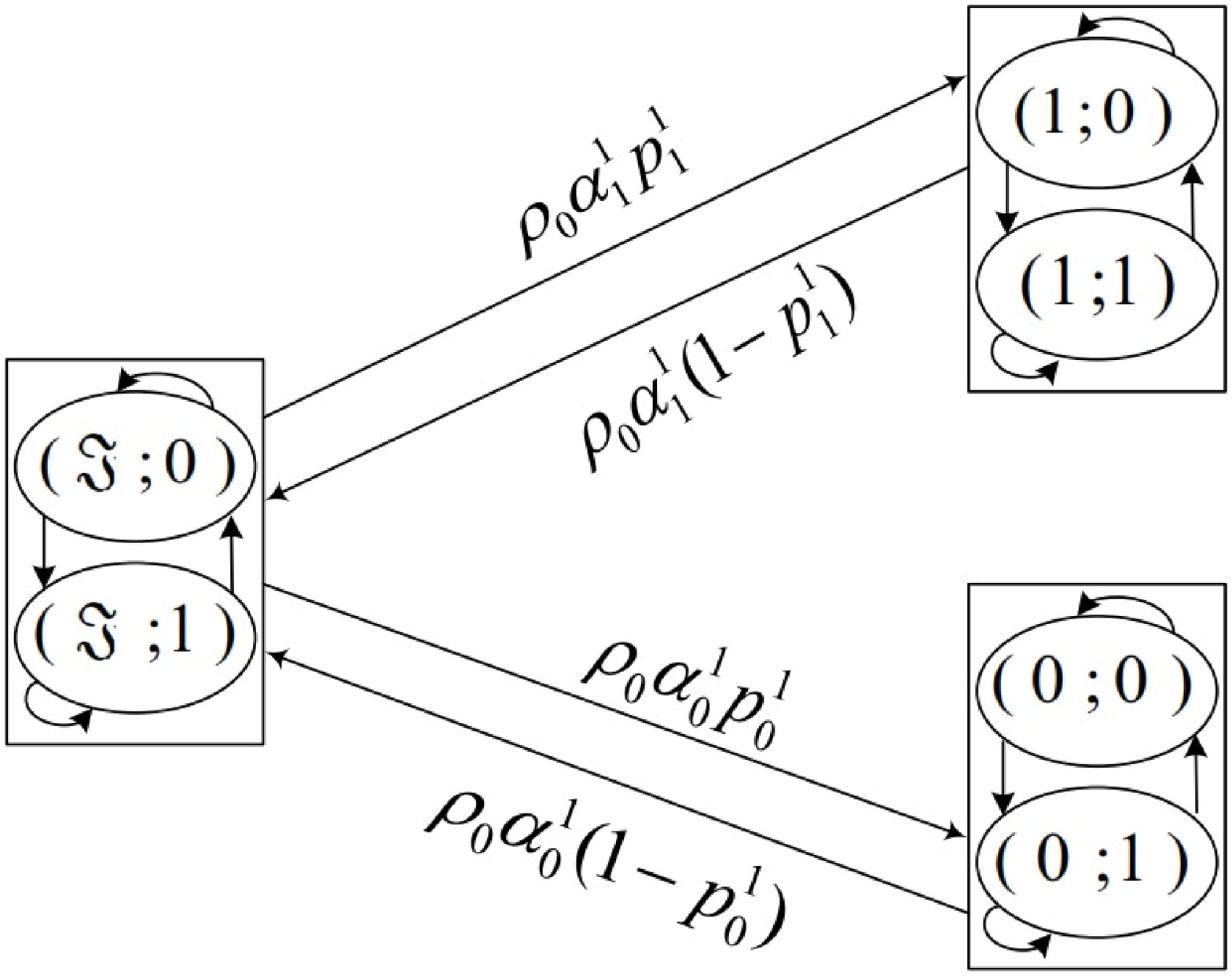}
                \caption{DTMC$^{1}$. }
                \label{markov1}
        \end{subfigure}
				\caption{Two different DTMCs associated with $s_{0}(t)$.}
				\label{markov}
\end{figure}

Inspired by the approach of \cite{qcsmaeylem}, we define a separate DTMC$^{\boldsymbol{\vartheta}}$ for each of the different realizations of channel state vector $\boldsymbol{\vartheta}\in\mathcal{S}(t)$ to mitigate this problem. Let $y^{\boldsymbol{\vartheta}}\in\mathcal{M}^{\boldsymbol{\vartheta}}$ be the states of DTMC$^{\boldsymbol{\vartheta}}$. Note that a transition from state $y^{\boldsymbol{\vartheta}}$ to another state $\hat{y}^{\boldsymbol{\vartheta}}$ in DTMC$^{\boldsymbol{\vartheta}}$ only depends on the current state $y^{\boldsymbol{\vartheta}}$ given the current channel state vector $\textbf{s}(t)$ and the decision schedule. Thus, DTMC$^{\boldsymbol{\vartheta}}$ is Markovian for all $\boldsymbol{\vartheta}\in\mathcal{S}$.

Figure \ref{markov} shows two different DTMCs associated with each of two different states of $s_{0}$ for the network in {\em Example 1}. The transition probabilities follow directly from the definition of RQ-CSMA algorithm. Figure \ref{markov0} represents DTMC$^{0}$ where a ``Rectangle'' corresponds to $y^{0}$ and an oval corresponds to $\left(y^{0};s_{0}\right)$. For example, consider $(1;0)$ in DTMC$^{0}$; being in this state means that the channel of the node $0$ is OFF and only node $1$ is transmitting, while $(1;1)$ means that the channel of the node $0$ is ON and the last time that channel was OFF prior to the current time slot, node $1$ was transmitting. This follows from the definition of $y^{\boldsymbol{\vartheta}}$ in (\ref{yS}). Similarly, Figure \ref{markov1} represents DTMC$^{1}$ where a ``Rectangle'' corresponds to $y^{1}$ and an oval corresponds to $\left(y^{1};s_{0}\right)$. Note that according to RQ-CSMA, any state, $y^{0}$, (i.e., a rectangle) in DTMC$^{0}$ at any time $t$ can make a transition into another state, $\hat{y}^{0}$ ($y^{0}\neq\hat{y}^{0}$), if and only if $s_{0}(t)=0$, and the state freezes otherwise. The same argument follows for DTMC$^{1}$.

Now consider $y^{0}$ and $y^{1}$ as states for DTMC$^{0}$ and DTMC$^{1}$, respectively\footnote{Note that Markov chains with $\left(y^{0};s_{0}\right)$ and $\left(y^{1};s_{0}\right)$ as states are not time reversible. It is easy to verify this observation from Fig.\ref{markov}, where the outgoing probabilities from $\left(y^{0};0\right)$ and $\left(y^{0};1\right)$ to $\left(\hat{y}^{0};0\right)$ are the same whereas the incoming probabilities from $\left(\hat{y}^{0};1\right)$ to $\left(y^{0};s_{0}\right)$ do not exist if $y^{0}\neq\hat{y}^{0}$.}. The stationary distribution of DTMC$^{0}$ is $\pi(\mathfrak{I})= \frac{1}{Z^{0}}$, $\pi(1)= \frac{1}{Z^{0}}\frac{p^{0}_{1}}{1-p^{0}_{1}}$ and $\pi(0)= \frac{1}{Z^{0}}\frac{p^{0}_{0}}{1-p^{0}_{0}}$ where $Z^{0}=1+\frac{p^{0}_{0}}{1-p^{0}_{0}}+\frac{p^{0}_{1}}{1-p^{0}_{1}}$. Similarly the stationary distribution of DTMC$^{1}$ is $\pi(\mathfrak{I})= \frac{1}{Z^{1}}$, $\pi(1)= \frac{1}{Z^{1}}\frac{p^{1}_{1}}{1-p^{1}_{1}}$ and $\pi(0)= \frac{1}{Z^{1}}\frac{p^{1}_{0}}{1-p^{1}_{0}}$ where $Z^{1}=1+\frac{p^{1}_{0}}{1-p^{1}_{0}}+\frac{p^{1}_{1}}{1-p^{1}_{1}}$. Both DTMC$^{0}$ and DTMC$^{1}$ are time reversible and have product form stationary distributions. Moreover, at any given time only one of the DTMCs evolve. Also, the transition probabilities of DTMC$^{0}$ only depend on $p^{0}_{0}$ and $p^{0}_{1}$, whereas the transition probabilities in DTMC$^{1}$ only depends on $p^{1}_{0}$ and $p^{1}_{1}$.  Hence, we can assume that at any time the DTMCs are at steady state.

\vspace{-0.3cm}
\subsection{Proving the Throughput Optimality of RQ-CSMA}
\label{PTO}
In the following, we consider the general network topology with a source and $N$ relay nodes, with all channels to the destination being time varying ON-OFF channels, whereas the channels from node $0$ to all other nodes are always ON.  We show that DTMC$^{\boldsymbol{\vartheta}}$ for all $\boldsymbol{\vartheta}\in\mathcal{S}$ is time reversible with a product-form stationary distribution. Theorem \ref{th1} gives the product-form of the stationary distribution, and finally, Theorem \ref{thop} claims the throughput optimality of RQ-CSMA.

\begin{figure}
	\centering
		\includegraphics[scale=.27]{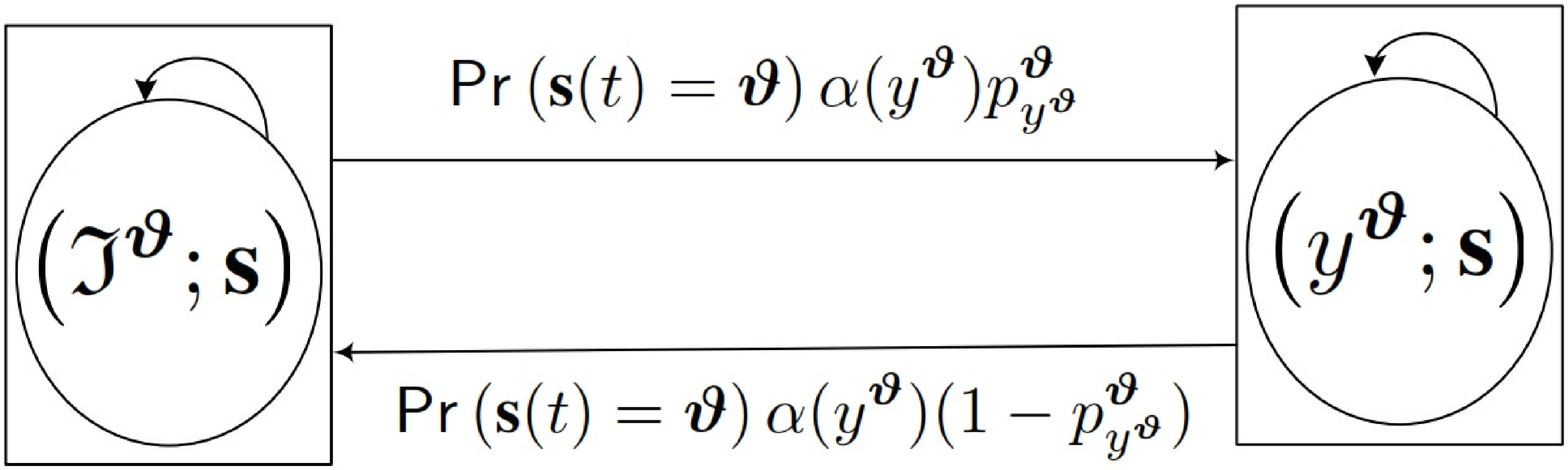}
	\caption{Transition probabilities of states in DTMC$^{\boldsymbol{\vartheta}}$ for any arbitrary $y^{\boldsymbol{\vartheta}}\in \mathcal{M}^{\boldsymbol{\vartheta}}$.}
	\label{markovproof}
\end{figure}
%Before presenting Theorem \ref{th1}, let us first introduce $\mathcal{N}^{\boldsymbol{\vartheta}}$ as the set of nodes who have an ON channel whenever the channel state vector is in state $\boldsymbol{\vartheta}$.

\newtheorem{th1}{Theorem}
\begin{th1}\label{th1}
DTMC$^{\boldsymbol{\vartheta}}$ for any $\boldsymbol{\vartheta}\in\mathcal{S}$ is reversible and it has the following
product-form stationary distribution:
\begin{align}
\pi\left(y^{\boldsymbol{\vartheta}}\right)& = \frac{1}{Z^{\boldsymbol{\vartheta}}}\frac{p^{\boldsymbol{\vartheta}}_{y^{\boldsymbol{\vartheta}}}}{1-p^{\boldsymbol{\vartheta}}_{y^{\boldsymbol{\vartheta}}}},\label{distoff}\nonumber\\
\pi\left(\mathfrak{I}^{\boldsymbol{\vartheta}}\right)& = \frac{1}{Z^{\boldsymbol{\vartheta}}},\\
\mbox{where } Z^{\boldsymbol{\vartheta}}& = 1+\sum_{y^{\boldsymbol{\vartheta}}\in \mathcal{M}^{\boldsymbol{\vartheta}}}\frac{p^{\boldsymbol{\vartheta}}_{y^{\boldsymbol{\vartheta}}}}{1-p^{\boldsymbol{\vartheta}}_{y^{\boldsymbol{\vartheta}}}}.
\end{align}
\end{th1}
\begin{proof}
Consider the state transitions of DTMC$^{\boldsymbol{\vartheta}}$ as given in Fig. \ref{markovproof}. For any $y^{\boldsymbol{\vartheta}}\in\mathcal{M}^{\boldsymbol{\vartheta}}$ we can write the following detailed balance equation which follows from RQ-CSMA:
\begin{align}
&\pi\left(\mathfrak{I}^{\boldsymbol{\vartheta}}\right)\mathsf{Pr}\left(\textbf{s}(t)=\boldsymbol{\vartheta}\right)\alpha(y^{\boldsymbol{\vartheta}})p^{\boldsymbol{\vartheta}}_{y^{\boldsymbol{\vartheta}}}\nonumber\\
&=\pi\left(y^{\boldsymbol{\vartheta}}\right)\mathsf{Pr}\left(\textbf{s}(t)=\boldsymbol{\vartheta}\right)\alpha(y^{\boldsymbol{\vartheta}})(1-p^{\boldsymbol{\vartheta}}_{y^{\boldsymbol{\vartheta}}}),
\end{align}
which simplifies to:
\begin{align}
\pi\left(y^{\boldsymbol{\vartheta}}\right)=\pi\left(\mathfrak{I}^{\boldsymbol{\vartheta}}\right)\frac{p^{\boldsymbol{\vartheta}}_{y^{\boldsymbol{\vartheta}}}}{1-p^{\boldsymbol{\vartheta}}_{y^{\boldsymbol{\vartheta}}}}\label{ydist},
\end{align}
Next we have, 
\begin{align}
&\pi\left(\mathfrak{I}^{\boldsymbol{\vartheta}}\right)+\sum_{y^{\boldsymbol{\vartheta}}\in \mathcal{M}^{\boldsymbol{\vartheta}}}\pi\left(y^{\boldsymbol{\vartheta}}\right)\nonumber\\
&=\pi\left(\mathfrak{I}^{\boldsymbol{\vartheta}}\right)+\sum_{y^{\boldsymbol{\vartheta}}\in \mathcal{M}^{\boldsymbol{\vartheta}}}\pi\left(\mathfrak{I}^{\boldsymbol{\vartheta}}\right)\frac{p^{\boldsymbol{\vartheta}}_{y^{\boldsymbol{\vartheta}}}}{1-p^{\boldsymbol{\vartheta}}_{y^{\boldsymbol{\vartheta}}}}=1.
\end{align}
If we define $Z^{\boldsymbol{\vartheta}} = 1+\sum_{y^{\boldsymbol{\vartheta}}\in \mathcal{M}^{\boldsymbol{\vartheta}}}\frac{p^{\boldsymbol{\vartheta}}_{y^{\boldsymbol{\vartheta}}}}{1-p^{\boldsymbol{\vartheta}}_{y^{\boldsymbol{\vartheta}}}}$, we have:
\begin{align}
\pi\left(\mathfrak{I}^{\boldsymbol{\vartheta}}\right) = \frac{1}{Z^{\boldsymbol{\vartheta}}}\label{Idist}
\end{align}
Substituting (\ref{Idist}) in (\ref{ydist}) gives the stationary distribution of DTMC$^{\boldsymbol{\vartheta}}$.
\end{proof}
The product-form stationary distribution of DTMC$^{\boldsymbol{\vartheta}}$, suggests that we can use the results established in \cite{srikantstable} to prove the throughput optimality of the algorithm. Let $x(t)$ be the schedule at time slot $t$. For the sake of simplicity, we drop $t$ and denote the weight associated with the scheduled node $x(t)$ by $\omega_{x}(t)$. Similarly, $\omega_{y^{\boldsymbol{\vartheta}}}(t)$ is the weight associated with the $y^{\boldsymbol{\vartheta}}(t)$.

\newtheorem{thsrikant}[th1]{Theorem}
\begin{thsrikant}\cite{srikantstable}\label{thsrikant}
Let $\omega^{*}(t):=\max_{x\in\mathcal{M}(t)}\omega_{x}(t)$, where $\mathcal{M}(t)$ is the set of all feasible schedules at time $t$. For a scheduling algorithm, for any $0<\epsilon,\delta<1$, if there exists $\beta>0$ such that if $\omega^{*}(t)>\beta$, the scheduling algorithm chooses a schedule $x(t)\in \mathcal{M}(t)$ that satisfies
\begin{align}
\mathsf{Pr}\left\{\omega_{x}(t)\geq (1-\epsilon)\omega^{*}(t)\right\}\geq 1-\delta
\end{align}
where $\omega_{x}(t)$ is a function of the lengths of queues defined in $(\ref{W0})$ and $(\ref{Wi})$. Then the scheduling algorithm is throughput optimal.
\end{thsrikant}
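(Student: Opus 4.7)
This is the standard Lyapunov-based sufficient condition for throughput optimality, so the plan is to apply the Foster--Lyapunov stability theorem with a Lyapunov function built from the weight functions $f_i$. Let $F_i$ be an antiderivative of $f_i$, and set
\[V(\mathbf{Q}(t)) := \sum_{i=0}^{N} F_i(Q_i(t)) + \sum_{i=1}^{N} F_i(Q_{0i}(t)).\]
Since $f_i\geq 0$ and $f_i(Q)\to\infty$ by condition~(1), $V$ is nonnegative and radially unbounded, so it suffices to show that for every arrival rate vector $\boldsymbol{\lambda}$ strictly inside the capacity region $\Lambda$, the conditional one-step drift $\Delta V(t):=\mathbb{E}[V(\mathbf{Q}(t+1))-V(\mathbf{Q}(t))\mid\mathbf{Q}(t)]$ is strictly negative outside a bounded set.

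The first step is to bound $\Delta V(t)$ in terms of the node weights. Using $F_i(b)-F_i(a)=\int_a^b f_i(u)\,du$ on the recursions~\eqref{Q0}--\eqref{Q0i}, together with the multiplicative near-invariance of $f_i$ under $O(A_{\max})$ shifts that condition~(2) guarantees, one obtains, for any $\epsilon'>0$ and all $\mathbf{Q}(t)$ with sufficiently large minimum entry,
\[\Delta V(t)\ \leq\ B\ +\ (1+\epsilon')\sum_i f_i(Q_i(t))\lambda_i\ -\ (1-\epsilon')\,\mathbb{E}[\omega_{x(t)}(t)\mid\mathbf{Q}(t)],\]
where $B$ depends only on $N$, $A_{\max}$, and the maximum per-slot service. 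The key observation here is that the weight $\omega_{x(t)}(t)$ of the chosen node, as defined in~\eqref{W0}--\eqref{Wi}, equals (up to the $(1\pm\epsilon')$ slack) the weighted service delivered in slot $t$, because the $\max$ and indicator factors in~\eqref{Qi}--\eqref{Q0i} are precisely what picks out the heavier of $Q_i(t)$ and $Q_{0i}(t)$.

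The second step is to lower-bound $\mathbb{E}[\omega_{x(t)}(t)\mid\mathbf{Q}(t)]$ by comparison with a randomized benchmark policy. Since $\boldsymbol{\lambda}$ lies strictly inside $\Lambda$, there exist $\eta>0$ and a stationary randomized scheduling policy whose expected instantaneous weight satisfies $\mathbb{E}[\omega_{\mathrm{rand}}(t)\mid\mathbf{Q}(t)]\geq (1+\eta)\sum_i f_i(Q_i(t))\lambda_i$; because $\omega^*(t)$ pointwise dominates the weight of every feasible schedule, $\mathbb{E}[\omega^*(t)\mid\mathbf{Q}(t)]\geq (1+\eta)\sum_i f_i(Q_i(t))\lambda_i$. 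The hypothesis of the theorem then yields $\mathbb{E}[\omega_{x(t)}(t)\mid\mathbf{Q}(t)]\geq (1-\epsilon)(1-\delta)\,\mathbb{E}[\omega^*(t)\mid\mathbf{Q}(t)]-\beta$, valid once $\mathbf{Q}(t)$ is large enough (via condition~(1)) to force $\omega^*(t)>\beta$. Substituting gives
\[\Delta V(t)\ \leq\ B+\beta\ -\ \Big[(1-\epsilon')(1-\epsilon)(1-\delta)(1+\eta)-(1+\epsilon')\Big]\sum_i f_i(Q_i(t))\lambda_i,\]
and choosing $\epsilon,\epsilon',\delta$ small enough that the bracketed constant is strictly positive drives $\Delta V(t)\leq -\kappa$ outside a bounded set, at which point Foster--Lyapunov closes the argument.

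The main obstacle I anticipate is the first step: converting the indicator-driven, integer-valued service terms in~\eqref{Q0}--\eqref{Q0i} into the clean weight expression $\mathbb{E}[\omega_{x(t)}(t)\mid\mathbf{Q}(t)]$. Condition~(2) on $f_i$ is essential here, because naively expanding $F_i$ around $Q_i(t)$ leaves an additive remainder on the order of $A_{\max}\cdot f_i(Q_i(t)+A_{\max})$, which would be of the same order as the dominant $f_i(Q_i)\lambda_i$ terms and would therefore prevent the drift from being made negative for large $\mathbf{Q}(t)$; only the multiplicative-continuity property allows these remainders to be absorbed into a $(1\pm\epsilon')$ factor.
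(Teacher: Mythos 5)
The paper itself offers no proof of this statement: it is imported verbatim from \cite{srikantstable} and used as a black box inside the proof of Theorem \ref{thop}. Measured against the argument in that reference, your sketch follows the same standard route: an antiderivative Lyapunov function $V=\sum_i F_i(Q_i)$, a drift bound in which condition (2) on the $f_i$ absorbs the $O(A_{\max})$ remainders into $(1\pm\epsilon')$ multiplicative factors, a comparison of $\omega^{*}(t)$ against a stationary randomized policy that exists because $\boldsymbol{\lambda}$ is interior to the rate region, and finally the hypothesis $\mathsf{Pr}\{\omega_{x}\geq(1-\epsilon)\omega^{*}\}\geq 1-\delta$ to transfer the lower bound to the schedule actually chosen. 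The quantifier ``for any $0<\epsilon,\delta<1$'' is what licenses your last step of shrinking $\epsilon,\epsilon',\delta$ until the bracketed constant is positive. One minor point: the benchmark inequality is usually taken additively, $\mathbb{E}[\omega_{\mathrm{rand}}\mid\mathbf{Q}]\geq\sum_i f_i(Q_i)(\lambda_i+\eta)$, which still forces the drift to $-\infty$ and avoids trouble when some $\lambda_i=0$; your multiplicative $(1+\eta)$ version needs a word about that degenerate case.

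The one genuine gap is in your first step, and it is specific to the relaying action. With $V=\sum_i F_i(Q_i)+\sum_i F_i(Q_{0i})$, scheduling node $0$ when $s_0(t)=0$ moves one packet from $Q_0$ to $Q_{0i^*}$ (see \eqref{Q0} and \eqref{Q0i}), so the service contribution to the drift is $-\bigl(f_0(Q_0(t))-f_{i^*}(Q_{0i^*}(t))\bigr)$ up to condition-(2) slack, whereas the weight credited to that action by \eqref{W0} is $f_0(Q_r(t))=f_0\bigl(\max_i\{Q_0(t)-Q_{0i}(t),0\}\bigr)$. These coincide only for linear $f$. For the choice $f(x)=\log(\beta x)$ that the paper actually simulates, taking $Q_0=2Q_{01}\to\infty$ gives $f_0(Q_0)-f_1(Q_{01})=\log 2$ while $f_0(Q_0-Q_{01})=\log(\beta Q_{01})\to\infty$, so the inequality $\Delta V\leq B+(1+\epsilon')\sum_i f_i(Q_i)\lambda_i-(1-\epsilon')\,\mathbb{E}[\omega_{x(t)}\mid\mathbf{Q}(t)]$ is false: the weight overstates the drift reduction. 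Closing this requires either restricting to $f=\mathrm{id}$ with a quadratic Lyapunov function (where differential backlog and drift match exactly), or defining the relay weight as $f_0(Q_0)-f_i(Q_{0i})$ in the usual multi-hop back-pressure style; the single-hop setting of \cite{srikantstable} never meets this issue because there every weight is $f_i(Q_i)$ times a channel rate. You correctly flagged step 1 as the main obstacle, but the fix is not condition (2) alone.
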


If we choose $p^{\boldsymbol{\vartheta}}_{y^{\boldsymbol{\vartheta}}}=\frac{\exp(\omega_{y^{\boldsymbol{\vartheta}}}(t))}{1+\exp(\omega_{y^{\boldsymbol{\vartheta}}}(t))}$, then the stationary distribution of DTMC$^{\boldsymbol{\vartheta}}$ becomes:
\begin{align}
&\pi\left(y^{\boldsymbol{\vartheta}}\right) = \frac{\exp(\omega_{y^{\boldsymbol{\vartheta}}}(t))}{Z^{\boldsymbol{\vartheta}}},\nonumber\\
&\pi\left(\mathfrak{I}^{\boldsymbol{\vartheta}}\right) = \frac{1}{Z^{\boldsymbol{\vartheta}}}\\
&Z^{\boldsymbol{\vartheta}} = 1+\sum_{y^{\boldsymbol{\vartheta}}\in \mathcal{M}^{\boldsymbol{\vartheta}}}\exp(\omega_{y^{\boldsymbol{\vartheta}}}(t))
\end{align}

By choosing $f_{i}$'s in $(\ref{W0})$ and $(\ref{Wi})$ wisely, $p^{\boldsymbol{\vartheta}}_{y^{\boldsymbol{\vartheta}}}$ changes slowly over time and we can assume that DTMC$^{\boldsymbol{\vartheta}}$ is in steady-state in every time slot (time scale separation) \cite{qcsma}. In the following we show that RQ-CSMA is throughput optimal by showing that it is close enough to another throughput optimal algorithm (i.e., MWS).
\newtheorem{thop}[th1]{Theorem}
\begin{thop} \label{thop}
For any $\boldsymbol{\vartheta}\in\mathcal{S}$, suppose that $\bigcup_{\mathcal{D}^{\boldsymbol{\vartheta}}\in \mathcal{M}^{\boldsymbol{\vartheta}}}\mathcal{D}^{\boldsymbol{\vartheta}} = \mathcal{M}^{\boldsymbol{\vartheta}}$. Let $p^{\boldsymbol{\vartheta}}_{y^{\boldsymbol{\vartheta}}}=\frac{\exp(\omega_{y^{\boldsymbol{\vartheta}}}(t))}{1+\exp(\omega_{y^{\boldsymbol{\vartheta}}}(t))}$, $\forall y^{\boldsymbol{\vartheta}}\in \mathcal{M}^{\boldsymbol{\vartheta}}$ when $\textbf{s}(t)=\boldsymbol{\vartheta}$. Then RQ-CSMA is throughput-optimal.
\end{thop}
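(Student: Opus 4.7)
The plan is to verify the sufficient condition given by Theorem \ref{thsrikant}: for every $\epsilon,\delta\in(0,1)$ there must exist $\beta>0$ so that whenever the maximum feasible weight $\omega^{*}(t)=\max_{x\in\mathcal{M}(t)}\omega_{x}(t)$ exceeds $\beta$, the schedule produced by RQ-CSMA has weight at least $(1-\epsilon)\omega^{*}(t)$ with probability at least $1-\delta$. Throughput optimality then follows directly.

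First I would substitute the chosen activation probabilities $p^{\boldsymbol{\vartheta}}_{y^{\boldsymbol{\vartheta}}}=\exp(\omega_{y^{\boldsymbol{\vartheta}}}(t))/(1+\exp(\omega_{y^{\boldsymbol{\vartheta}}}(t)))$ into the product-form stationary distribution of Theorem \ref{th1}, which produces the Gibbs-type form already displayed in the excerpt with partition function $Z^{\boldsymbol{\vartheta}}=1+\sum_{y^{\boldsymbol{\vartheta}}\in\mathcal{M}^{\boldsymbol{\vartheta}}}\exp(\omega_{y^{\boldsymbol{\vartheta}}}(t))$. The hypothesis $\bigcup_{\mathcal{D}^{\boldsymbol{\vartheta}}\in\mathcal{M}^{\boldsymbol{\vartheta}}}\mathcal{D}^{\boldsymbol{\vartheta}}=\mathcal{M}^{\boldsymbol{\vartheta}}$ is precisely what guarantees that DTMC$^{\boldsymbol{\vartheta}}$ is irreducible on $\mathcal{M}^{\boldsymbol{\vartheta}}\cup\{\mathfrak{I}^{\boldsymbol{\vartheta}}\}$, so this stationary distribution is well defined and strictly positive on every state.

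The second step, and the main technical obstacle, is to justify time-scale separation. Because $f_i$ satisfies the slow-variation condition of Section \ref{sec:BackgroundAndPreliminaries} and per-slot queue changes are bounded by $A_{\max}+1$, the weights $\omega_{y^{\boldsymbol{\vartheta}}}(t)$, and hence the activation probabilities $p^{\boldsymbol{\vartheta}}_{y^{\boldsymbol{\vartheta}}}$, change by at most a multiplicative factor close to $1$ from slot to slot once the queues are large. Following the strategy of \cite{srikantstable,qcsma}, this lets one couple DTMC$^{\boldsymbol{\vartheta}}$ to a chain with frozen probabilities and argue that its distribution at time $t$ differs from the stationary one of Theorem \ref{th1} by a negligible amount. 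The complication specific to our setting is that DTMC$^{\boldsymbol{\vartheta}}$ only takes a step in slots where $\textbf{s}(t)=\boldsymbol{\vartheta}$, so its ``clock'' runs at rate $\rho_{\boldsymbol{\vartheta}}=\prod_{i:\vartheta_i=1}\rho_i\prod_{i:\vartheta_i=0}(1-\rho_i)$; one needs to show that between two consecutive visits of the channel state to $\boldsymbol{\vartheta}$ the weights have not drifted by more than $(1\pm\epsilon)$, which follows from the finite cardinality of $\mathcal{S}$ and the bounded per-slot drift, combined with the mixing-time bounds already established for Q-CSMA chains.

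Given time-scale separation, the concentration step is short. Conditioning on $\textbf{s}(t)=\boldsymbol{\vartheta}$ and writing $\omega^{*,\boldsymbol{\vartheta}}(t)=\max_{y^{\boldsymbol{\vartheta}}\in\mathcal{M}^{\boldsymbol{\vartheta}}}\omega_{y^{\boldsymbol{\vartheta}}}(t)$, any schedule with $\omega_{y^{\boldsymbol{\vartheta}}}(t)<(1-\epsilon)\omega^{*,\boldsymbol{\vartheta}}(t)$ is chosen with probability at most $\exp(\omega_{y^{\boldsymbol{\vartheta}}}(t))/\exp(\omega^{*,\boldsymbol{\vartheta}}(t))\le \exp(-\epsilon\,\omega^{*,\boldsymbol{\vartheta}}(t))$, and the idle state with probability at most $\exp(-\omega^{*,\boldsymbol{\vartheta}}(t))$. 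Summing over $\mathcal{M}^{\boldsymbol{\vartheta}}$, whose cardinality is bounded by $N+2$, yields
\[
\Pr\{\omega_{x}(t)<(1-\epsilon)\omega^{*}(t)\mid\textbf{s}(t)=\boldsymbol{\vartheta}\}\le (N+2)\exp(-\epsilon\,\omega^{*,\boldsymbol{\vartheta}}(t)).
\]
Choosing $\beta$ large enough to drive the right-hand side below $\delta$ uniformly in $\boldsymbol{\vartheta}$, then deconditioning on $\textbf{s}(t)$ and invoking Theorem \ref{thsrikant}, establishes throughput optimality of RQ-CSMA.
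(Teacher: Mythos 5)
Your proposal is correct and follows essentially the same route as the paper: substitute the logistic activation probabilities into the product-form stationary distribution of Theorem \ref{th1}, bound the stationary mass of schedules with weight below $(1-\epsilon)\omega^{\boldsymbol{\vartheta}}(t)$ by $(N+2)\exp(-\epsilon\,\omega^{\boldsymbol{\vartheta}}(t))$ using $Z^{\boldsymbol{\vartheta}}>\exp(\omega^{\boldsymbol{\vartheta}}(t))$, decondition over $\boldsymbol{\vartheta}\in\mathcal{S}$, and invoke Theorem \ref{thsrikant}. The only difference is that you devote more care to justifying the time-scale separation (including the fact that DTMC$^{\boldsymbol{\vartheta}}$ advances only on slots with $\textbf{s}(t)=\boldsymbol{\vartheta}$), a point the paper simply asserts by appeal to the slow variation of the $f_i$'s.
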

\begin{proof}
We define $\omega^{\boldsymbol{\vartheta}}(t):=\max_{x\in\mathcal{M}^{\boldsymbol{\vartheta}}(t)}\omega_{x}(t)$.
Next define the following states:
\begin{align}
&\chi^{\boldsymbol{\vartheta}} = \left\{\left(y^{\boldsymbol{\vartheta}};\boldsymbol{\vartheta}\right)|s.t.,\ \ \omega_{y^{\boldsymbol{\vartheta}}}(t)<(1-\epsilon)\omega^{\boldsymbol{\vartheta}}(t)\right\},\\
&\Psi^{\boldsymbol{\vartheta}} = \bigcup_{\boldsymbol{\vartheta}\in\mathcal{S}}\chi^{\boldsymbol{\vartheta}}\\
&\varphi^{\boldsymbol{\vartheta}} = \left\{y^{\boldsymbol{\vartheta}}|s.t.,\ \ \omega_{y^{\boldsymbol{\vartheta}}}(t)<(1-\epsilon)\omega^{\boldsymbol{\vartheta}}(t)\right\}.
\end{align}
Note that $\pi\left(\bigcup_{\boldsymbol{\vartheta}\in\mathcal{S}}\left(y^{\boldsymbol{\vartheta}};\boldsymbol{\vartheta}\right)\right)=\pi\left(y^{\boldsymbol{\vartheta}}\right)$ so $\pi(\Psi^{\boldsymbol{\vartheta}})=\pi(\varphi^{\boldsymbol{\vartheta}})$. 
We now calculate the probability of a state in set $\chi^{\boldsymbol{\vartheta}}$:
\begin{align}
\pi(\chi^{\boldsymbol{\vartheta}}) <&\pi(\Psi^{\boldsymbol{\vartheta}})=\pi(\varphi^{\boldsymbol{\vartheta}}) \nonumber\\
= &\sum_{y^{\boldsymbol{\vartheta}}\in\varphi^{\boldsymbol{\vartheta}}}\pi\left(y^{\boldsymbol{\vartheta}}\right)\nonumber\\
=&\sum_{y^{\boldsymbol{\vartheta}}\in\varphi^{\boldsymbol{\vartheta}}}\frac{\exp(\omega_{y^{\boldsymbol{\vartheta}}(t)})}{Z^{\boldsymbol{\vartheta}}}\nonumber\\ \leq &\frac{\left|\chi^{\boldsymbol{\vartheta}}\right|(1-\epsilon)\omega^{\boldsymbol{\vartheta}}(t)}{Z^{\boldsymbol{\vartheta}}}\nonumber\\
<&\frac{{N+2}}{\exp(\epsilon\omega^{\boldsymbol{\vartheta}}(t))},\label{thopeq}
\end{align}
where (\ref{thopeq}) is true because $\left|\chi^{\boldsymbol{\vartheta}}\right|\leq \left|\mathcal{M}^{\boldsymbol{\vartheta}}\right|\leq N+2$, and
\begin{align}
Z^{\boldsymbol{\vartheta}}>\exp(\max_{y^{\boldsymbol{\vartheta}}\in \mathcal{M^{\boldsymbol{\vartheta}}}}\omega_{y^{\boldsymbol{\vartheta}}})>\exp(\omega^{\boldsymbol{\vartheta}}(t)).
\end{align}
Therefore, if
\begin{align}
\omega^{\boldsymbol{\vartheta}}(t)>\frac{1}{\epsilon}\left(\log\left(N+2\right)+\log\frac{1}{\delta^{\boldsymbol{\vartheta}}}\right),
\end{align}
then $\pi(\chi^{\boldsymbol{\vartheta}})<\delta^{\boldsymbol{\vartheta}}$. Then we have the following results:
\begin{align}
&\mathsf{Pr}\left\{\omega_{y^{\boldsymbol{\vartheta}}}(t)\geq (1-\epsilon)\omega^{\boldsymbol{\vartheta}}(t)|\textbf{s}(t)=\boldsymbol{\vartheta}\right\}\nonumber\\
&=1-\mathsf{Pr}\left\{\omega_{y^{\boldsymbol{\vartheta}}}(t)< (1-\epsilon)\omega^{\boldsymbol{\vartheta}}(t)|\textbf{s}(t)=\boldsymbol{\vartheta}\right\}\nonumber\\
&=1-\frac{\frac{N+2}{e^{\epsilon\omega^{\boldsymbol{\vartheta}}(t)}}}{\mathsf{Pr}\left(\textbf{s}(t) =\boldsymbol{\vartheta} \right)}:=1-\delta^{\boldsymbol{\vartheta}}.
\end{align}
Let $x$ be the node scheduled in slot $t$ by RQ-CSMA. Using the total probability law we have:
\begin{align}
&\mathsf{Pr}\left\{\omega_{x}(t)\geq (1-\epsilon)\omega^{*}(t)\right\}\nonumber\\
&=\sum_{\boldsymbol{\vartheta}\in\mathcal{S}}\left\{\right.\mathsf{Pr}\left\{\omega_{y^{\boldsymbol{\vartheta}}}(t)\geq (1-\epsilon)\omega^{\boldsymbol{\vartheta}}(t)|\textbf{s}(t)=\boldsymbol{\vartheta}\right\}\label{eq26}\\
&\phantom{{}=1000}\times \mathsf{Pr}\left(\textbf{s}(t) =\boldsymbol{\vartheta} \right)\left.\right\}\notag\\
&\geq(1-\delta)\sum_{\boldsymbol{\vartheta}\in\mathcal{S}}\mathsf{Pr}\left(\textbf{s}(t) =\boldsymbol{\vartheta} \right)=1-\delta,
\end{align}
where $\delta = \max_{\boldsymbol{\vartheta}\in\mathcal{S}}\delta^{\boldsymbol{\vartheta}}$. Note that $(\ref{eq26})$ holds because $x =y^{\boldsymbol{\vartheta}}$ and $\omega^{*}(t)=\omega^{\boldsymbol{\vartheta}}(t)$ whenever $\textbf{s}(t)=\boldsymbol{\vartheta}$. Hence, Algorithm \ref{alg} satisfies the condition of Theorem \ref{thsrikant}, and thus, it is throughput optimal. %Although the scheduler in RQ-CSMA involves the evolutions of $N+1$ Markov chains where a combination of the transmission schedule and the channel state is the system state, it is throughput optimal as long as it is Markovian and the conditions in Theorem \ref{thsrikant} are satisfied.
\end{proof}

\subsection{On the complexity of the RQ-CSMA}
\label{sec:OnTheComplexityOfTheAlgorithms}
%The complexity of Algorithm \ref{algcont} is due to a loop of size $N+1$ a single element in the look up table of size $2^{N+1}$
At time slot $t$, each node $i$ first needs to to find a decision schedule, and the channel state vector $\textbf{s}(t)$. Note that this requires $N+1+W$ mini slots which results in $O(N)$ complexity. Then in RQ-CSMA each node needs to find the appropriate DTMC$^{\boldsymbol{\vartheta}}$ scheduled for evolution which means that each node needs to find a single element from a set with size $2^{N+1}$. It is well known that the simple binary search, has logarithmic complexity \cite{Ahuja}. Consequently, the total complexity of RQ-CSMA is $O(\log(2^{N+1}))+O(N)=O(N)$.

\section{Achievable Rate Region of the Relay Network}
\label{sec:CapacityRegionOfTheCooperativeNetwork}
In order to validate the throughput optimality of RQ-CSMA, we focus on a network consisting of a source and a relay nodes. This helps us visualize the geometric representation of the achievable rate region and simplify the performance comparison of RQ-CSMA and Q-CSMA. Lemma \ref{capregion} establishes the achievable rate region for two nodes with node $0$ having arrival rate of $\lambda_0$ and probability of ON channel state, $\rho_0$  and node $1$ having arrival rate of $\lambda_1$ and probability of ON channel state, $\rho_1$:

\newtheorem{capregion}{Lemma}
\begin{capregion} \label{capregion}
For two nodes, the achievable rate region of the relay network, when $\rho_1< 0.5$, is
\begin{align}
\Lambda^{c}=\left\{\boldsymbol\lambda|\lambda_{1}<\rho_{1}, \lambda_{0}+\lambda_{1}<\rho_{0}+\rho_{1}(1-\rho_{0})\right\}\label{R1}
\end{align}
and for $\rho_1\geq 0.5$ is
\begin{align}
\Lambda^{c}=\left\{\right.\boldsymbol\lambda|\lambda_{1}<(1-\rho_{0})(2\rho_{1}-1),2\lambda_{0}+\lambda_{1}<1+\rho_{0};\nonumber\\
(1-\rho_{0})(2\rho_{1}-1)\leq\lambda_{1}<\rho_{1},\lambda_{0}+\lambda_{1}<\rho_{0}+(1-\rho_{0})\rho_{1}\left.\right\}\label{R2}
\end{align}
\end{capregion}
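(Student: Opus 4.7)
The plan is to reduce the stability problem to a linear program over long-term scheduling rates and then match the resulting polytope to the stated region. Since RQ-CSMA is throughput optimal by Theorem~\ref{thop}, it suffices to characterize the set of arrival rates $(\lambda_0,\lambda_1)$ stabilizable by \emph{some} stationary randomized policy. I will introduce four aggregate service rates: $A$ for direct transmissions by node $0$ to the destination, $B$ for forwardings from node $0$ to node $1$, $C$ for node $1$'s own transmissions to the destination, and $D$ for relay transmissions of node $0$'s packets by node $1$. Queue stability at node $0$, at node $1$'s own queue, and at the relay queue $Q_{01}$ then requires $\lambda_0<A+B$, $\lambda_1<C$, and $B\leq D$ respectively; at the boundary of the rate region I will take $B=D$ without loss of generality.

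For the outer bound I will partition time by the joint channel realization $(s_0,s_1)\in\{0,1\}^2$. Because only one node can transmit per slot, and because node $0$ never gains from forwarding when $s_0=1$ nor from direct transmission when $s_0=0$, the per-state constraints reduce to $A_{11}+C_{11}+D_{11}\leq\rho_0\rho_1$, $A_{10}\leq\rho_0(1-\rho_1)$, $B_{01}+C_{01}+D_{01}\leq(1-\rho_0)\rho_1$, and $B_{00}\leq(1-\rho_0)(1-\rho_1)$, where subscripts denote the underlying channel realization. Summing these produces three aggregate inequalities: the total-time bound $A+B+C+D\leq 1$, the node-$1$ capacity bound $C+D\leq\rho_1$, and the destination-throughput bound $A+C+D\leq\rho_0+(1-\rho_0)\rho_1=1-(1-\rho_0)(1-\rho_1)$, the last obtained by discarding the non-destination contribution $B_{01}$. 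Substituting the stability conditions with $B=D$, $C\geq\lambda_1$, $A+D\geq\lambda_0$, and $A\leq\rho_0$, I will obtain the three necessary conditions $\lambda_1<\rho_1$, $\lambda_0+\lambda_1<\rho_0+(1-\rho_0)\rho_1$, and $2\lambda_0+\lambda_1<1+\rho_0$, the last via the chain $2\lambda_0-A\leq A+2D\leq 1-C\leq 1-\lambda_1$ together with $A\leq\rho_0$.

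A direct algebraic comparison then shows that the $\lambda_0$-upper-bounds implied by the second and third necessary conditions cross exactly at $\lambda_1^{\star}=(1-\rho_0)(2\rho_1-1)$. Because $\lambda_1^{\star}<0$ whenever $\rho_1<1/2$, the third bound is dominated by the second for all $\lambda_1\geq 0$ in this regime, and the region collapses to \eqref{R1}. When $\rho_1\geq 1/2$ we have $\lambda_1^{\star}\in[0,\rho_1)$; the third bound then dominates for $\lambda_1<\lambda_1^{\star}$ while the second dominates for $\lambda_1\geq\lambda_1^{\star}$, producing the two-piece description \eqref{R2}. For achievability, given any $(\lambda_0,\lambda_1)$ strictly inside the claimed region, I will exhibit explicit per-state scheduling probabilities with $C=\lambda_1$, $B=D$, and $A,D$ chosen so that $A+D=\lambda_0$ and no per-state constraint is violated; throughput optimality of RQ-CSMA (Theorem~\ref{thop}) then guarantees stability of all queues. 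The main obstacle is the bookkeeping between per-state scheduling variables and aggregate rates, especially verifying that the two pieces of \eqref{R2} meet consistently at $\lambda_1=\lambda_1^{\star}$ and that $D\leq\min(1-\rho_0,\rho_1-\lambda_1)$ is respected throughout the boundary.
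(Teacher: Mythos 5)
Your proposal is correct in outline but takes a genuinely different route from the paper's. The paper proves the lemma by a Lyapunov-drift analysis of max-weight scheduling: it merges the relay queue into node $1$'s own queue, writes the conditional expected service rates $\mu_0$, $\mu_{01}$, $\mu_1$ under MWS as functions of the queue-length ordering, and shows that a quadratic Lyapunov function has negative drift for every arrival vector inside \eqref{R1} (the $\rho_1\geq 0.5$ case is only sketched and deferred to a thesis); notably, the paper establishes only the achievability direction and never argues that points outside the region are unstabilizable. Your linear-programming characterization over per-channel-state time allocations gives both directions: the aggregate inequalities $C+D\leq\rho_1$, $A+C+D\leq 1-(1-\rho_0)(1-\rho_1)$, and $A\leq\rho_0$ together with $A+B+C+D\leq 1$ and the stability requirements $\lambda_0\leq A+B$, $\lambda_0\leq A+D$, $\lambda_1\leq C$ yield the converse, while LP feasibility at the corner points (e.g., $(\lambda_1,\lambda_0)=\bigl(0,\tfrac{1+\rho_0}{2}\bigr)$ requires $D=\tfrac{1-\rho_0}{2}\leq(1-\rho_0)\rho_1$, which is exactly the $\rho_1\geq\tfrac12$ dichotomy) plus time-sharing gives achievability; your approach also makes the kink at $\lambda_1^{\star}=(1-\rho_0)(2\rho_1-1)$ transparent as the crossing of two facets, which the drift computation obscures. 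Two points to tighten when you write it up: for the converse, derive the aggregate bounds from the full per-state constraints (including the forwarding terms $B_{11}$, $B_{10}$) rather than from the dominance-reduced ones, since discarding nonnegative terms only weakens the per-state inequalities and the dominance argument is genuinely needed only in the achievability direction; and for achievability you do not need Theorem \ref{thop} at all, because the constructed stationary randomized policy stabilizes the queues directly by a standard drift argument, which keeps the lemma independent of the scheduling algorithm whose optimality it is meant to certify.
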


\begin{proof}
The proof is given in Appendix.
\end{proof}

The achievable rate region of the relay network, $\Lambda^{c}$, for the case of two nodes with $\rho_1\geq 0.5$ and with $\rho_1< 0.5$ is depicted in Figure \ref{cap3} and \ref{cap2}, respectively. Note that in case of no relay service, node $0$ can only achieve a rate of $\rho_0$, whereas when relaying is enabled, it can achieve an additional rate of $\frac{1-\rho_0}{2}$ for $\rho_2\geq 0.5$ and $\rho_1(1-\rho_0)$ for $\rho_1< 0.5$.

\begin{figure}
	\centering
		\includegraphics[scale=.6]{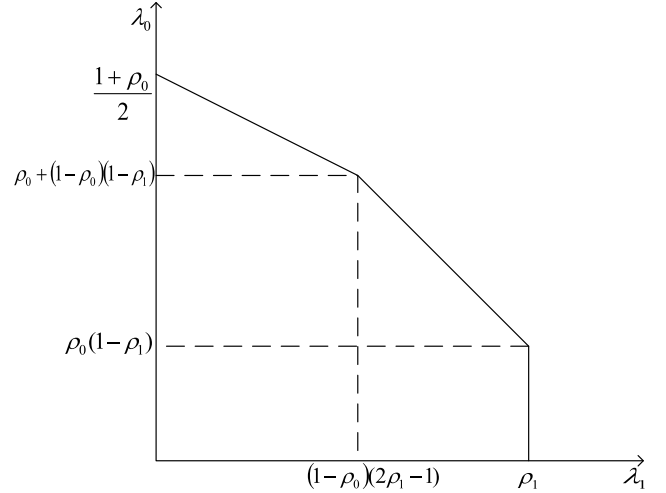}
	\caption{Achievable rate region of the two node relay network with $\rho_1\geq 0.5$. }
	\label{cap3}
\end{figure}

\begin{figure}
	\centering
		\includegraphics[scale=.6]{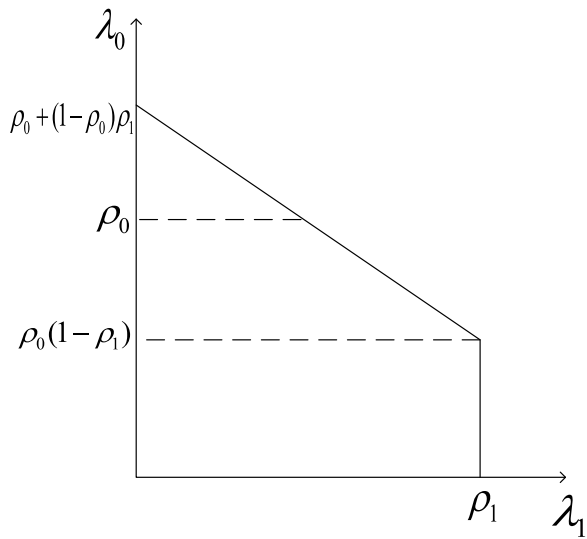}
	\caption{Achievable rate region of the two node relay network with $\rho_1< 0.5$.}
	\label{cap2}
\end{figure}

\section{Numerical Results}
\label{sec:Validnum}

In this section, we numerically validate the throughput optimality of RQ-CSMA and compare its achievable rate region with those of Q-CSMA and a simple algorithm utilizing uniform random back-off \cite{qcsma,qcsmaeylem}. The uniform back-off (UB) algorithm is defined in Algorithm \ref{alg8}. Note that Q-CSMA is not designed for time varying channels so in Q-CSMA if a node is chosen to be activated when its channel is at OFF state, it remains idle in that slot. Also, note that we choose $f_{i}$'s in (\ref{W0}) and (\ref{Wi}) to be of the form $f(x)=\log(\beta x)$ with $\beta=0.1$.

Note that we do not employ the RTD/CTD mechanism in the simulations, since it is required to be implemented by other policies as well to solve the hidden node problem, and its overhead will be similar in all.  Hence, in the following 
 we assume that any pairs of nodes are able to hear each other.

%the hidden node problem also exist for other policies as well, and it is not only associated with RQ-CSMA. All of the algorithms above need to solve the hidden node problem by RTD/CTD mechanism and since RTD/CTD introduces the same overhead for all of the algorithms, its negative effects will be the same for every algorithm. Thus, we do not employ the RTD/CTD mechanism for any of the algorithms and assume that any pairs of nodes are able to hear each other.}

\begin{algorithm}
\caption{Uniform Backoff (UB)}\label{alg8}
\begin{text}
At each time slot,
\end{text}
\begin{algorithmic}[1]
\State Node $i$ selects a random back-off time $T_{i}=\text{Uniform}\left[1,W\right]$ and waits for $T_{i}$ control mini slots.
\State If node $i$ hears an INTENT message from any other node 
before the $(T_{i}+1)$-th control mini-slot, it will not be
included in the transmission schedule and will not
transmit an INTENT message.
\State If node $i$ does not hear an INTENT message from any other node before the $(T_{i}+1)$-th control mini-slot, it will send
an RESV message to all other nodes at the beginning of
$(T_{i}+1)$-th control mini-slot.
\begin{itemize}
	\item If there is a collision, node $i$ will not be included in transmission schedule, $x(t)$.
	\item If there is no collision, node $i$ will be included in the transmission schedule, $x(t)$.
\end{itemize}
\State If $x(t)=i$, then node $i$ will transmit.
(Links with empty queues will keep silent in this time
slot.)
\end{algorithmic}
\end{algorithm}

\subsection{Validating Throughput Optimality by Simulations}
\label{sim22}
We consider a network consisting of two nodes. We evaluate the performance of RQ-CSMA according to the average queue lengths of all queues in the network, i.e., $Q_{avg}=\frac{1}{t}\sum^{t}_{\tau=0}(Q_{0}(\tau)+Q_{1}(\tau)+Q_{01}(\tau))$. To calculate $Q_{avg}$, we run the simulation for $t = 10^{4}$ slots, and over 10 different sample paths. The channel statistics are chosen as $\rho_{0}=0.4$, $\rho_{1}=0.7$. The arrival rates, $(\lambda_1,\lambda_0)$, are taken in the region $\left[0,\rho_1+0.1\right]\times\left[0,\frac{1+\rho_0}{2}+0.1\right]$ where we evaluate $Q_{avg}$ for each point in this region. Figure \ref{MQavgplot} shows the contour graph of $Q_{avg}$ versus $\lambda_0$ and $\lambda_1$, where the solid and thick black line is the achievable rate region of the relay network. We observe that the
network under RQ-CSMA exhibits unstable behavior, shown by the increase in
the average queue lengths (lighter colored contours) as we cross the boundary of the
stability region.

\begin{figure}
	\centering
		\includegraphics[scale=.6]{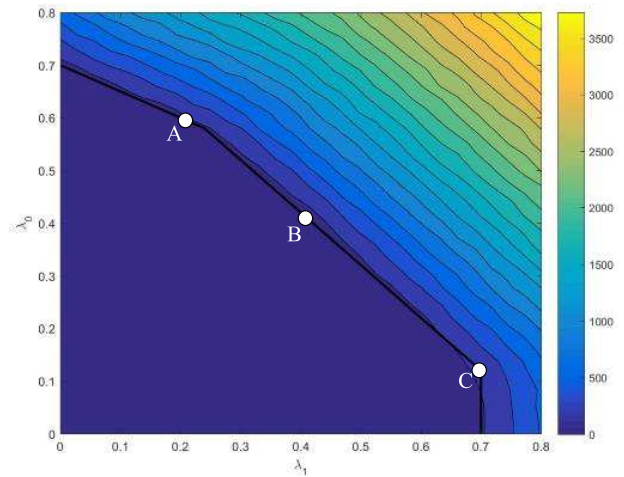}
	\caption{Contour plot of $Q_{avg}$ versus $\lambda_0$ and $\lambda_1$ evaluated by RQ-CSMA.}
	\label{MQavgplot}
\end{figure}

Next, to further investigate the stability region of the RQ-CSMA, we plot the time evolution of the queues at certain points in the achievable rate region. We choose points $A = (0.6,0.2)$, $B=(0.4,0.4)$ and $C=(0.12,0.7)$ on the boundary of the achievable rate region as depicted in Figure \ref{MQavgplot}. Then, we evaluate the time evolution of the queues at $A-\boldsymbol{\epsilon}$, $B-\boldsymbol{\epsilon}$ and $C-\boldsymbol{\epsilon}$, i.e., the points are inside the achievable rate region and they are sufficiently close to the boundary where $\boldsymbol{\epsilon}=0.01$. Also we evaluate the time evolution of the queues for $A+\boldsymbol{\epsilon}$, $B+\boldsymbol{\epsilon}$ and $C+\boldsymbol{\epsilon}$, i.e., the points are outside the achievable rate region and they are sufficiently close to the boundary.
Figures \ref{inner1}, \ref{inner2} and \ref{inner3} show that whenever we take a point inside the achievable rate region close to the boundary, RQ-CSMA is able to stabilize the queues as we expect from the throughput optimality of RQ-CSMA.
\begin{figure}
        \centering
        \begin{subfigure}[t]{0.3\textwidth}
                \includegraphics[width=\textwidth]{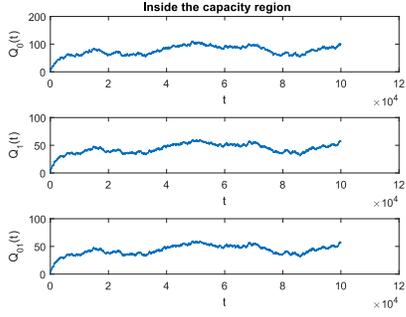}
                \caption{Time evolution of queues at $A-\boldsymbol{\epsilon}=(0.59,0.19)$.}
                \label{inner1}
        \end{subfigure}%
        \hfill
        \begin{subfigure}[t]{0.3\textwidth}
                \includegraphics[width=\textwidth]{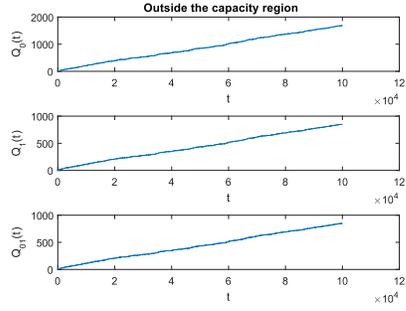}
                \caption{Time evolution of queues at $A+\boldsymbol{\epsilon}=(0.61,0.21)$.}
                \label{outer1}
        \end{subfigure}
				\caption{Time evolution of queues for $A-\boldsymbol{\epsilon}$ and $A+\boldsymbol{\epsilon}$ under RQ-CSMA.}
				\label{MQevol1}
\end{figure}
%%%%%%%%%%%%%%%%%
\begin{figure}
        \centering
        \begin{subfigure}[t]{0.3\textwidth}
                \includegraphics[width=\textwidth]{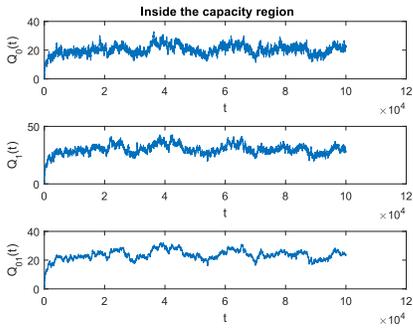}
                \caption{Time evolution of queues at $B-\boldsymbol{\epsilon}=(0.39,0.41)$.}
                \label{inner2}
        \end{subfigure}%
        \hfill
        \begin{subfigure}[t]{0.3\textwidth}
                \includegraphics[width=\textwidth]{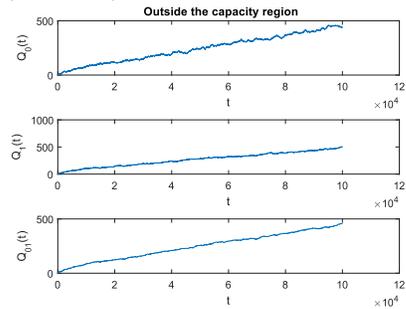}
                \caption{Time evolution of queues at $B+\boldsymbol{\epsilon}=(0.41,0.43)$.}
                \label{outer2}
        \end{subfigure}
				\caption{Time evolution of queues for $B-\boldsymbol{\epsilon}$ and $B+\boldsymbol{\epsilon}$ under RQ-CSMA.}
				\label{MQevol2}
\end{figure}
%%%%%%%%%%%%%%%%%%%%%
\begin{figure}
        \centering
        \begin{subfigure}[t]{0.3\textwidth}
                \includegraphics[width=\textwidth]{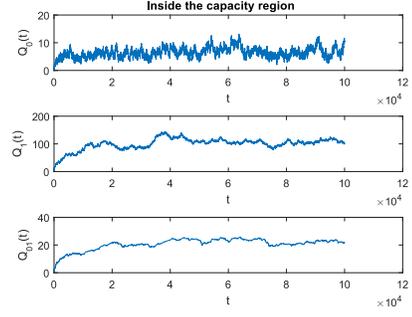}
                \caption{Time evolution of queues at $C-\boldsymbol{\epsilon}=(0.11,0.69)$.}
                \label{inner3}
        \end{subfigure}%
        \hfill
        \begin{subfigure}[t]{0.3\textwidth}
                \includegraphics[width=\textwidth]{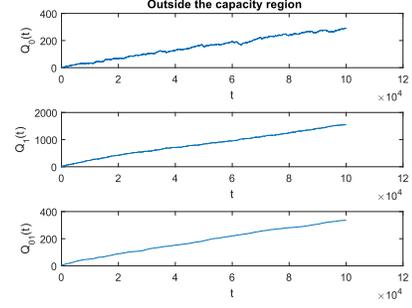}
                \caption{Time evolution of queues at $C+\boldsymbol{\epsilon}=(0.13,0.71)$.}
                \label{outer3}
        \end{subfigure}
				\caption{Time evolution of queues for $C-\boldsymbol{\epsilon}$ and $C+\boldsymbol{\epsilon}$ under RQ-CSMA.}
				\label{MQevol3}
\end{figure}

We repeat the same procedure for Q-CSMA and UB algorithms. Figure \ref{Qavgplot} and \ref{8avgplot} show the contour graphs of $Q_{avg}$ versus $\lambda_0$ and $\lambda_1$ under Q-CSMA and UB, respectively.
We observe that the
network under both Q-CSMA and UB exhibit unstable behavior, shown by the increase in
the average queue lengths, even when the arrivals are far away from the boundaries of the
stability region. To investigate further, we plot the time evolution of the queues under Q-CSMA and UB at $A-\boldsymbol{\epsilon}$. The results are depicted in Figure \ref{Qevol1} and \ref{8evol1}. It can be seen that Q-CSMA and UB are unable to stabilize the queues for arrival rates stabilized by RQ-CSMA.

\begin{figure}
	\centering
		\includegraphics[scale=.45]{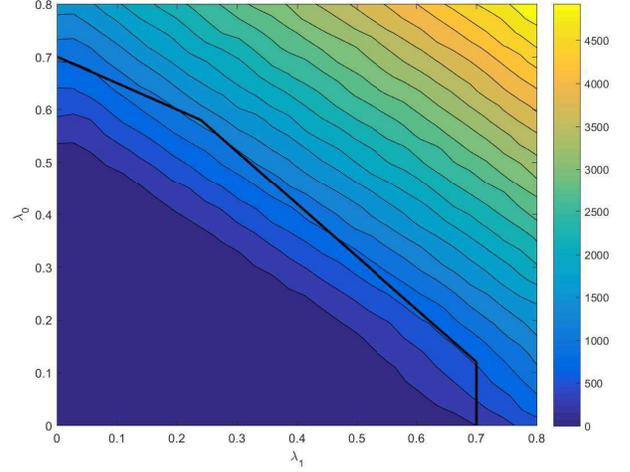}
	\caption{Contour plot of $Q_{avg}$ versus $\lambda_0$ and $\lambda_1$ evaluated by Q-CSMA.}
	\label{Qavgplot}
\end{figure}

\begin{figure}
	\centering
		\includegraphics[scale=.5]{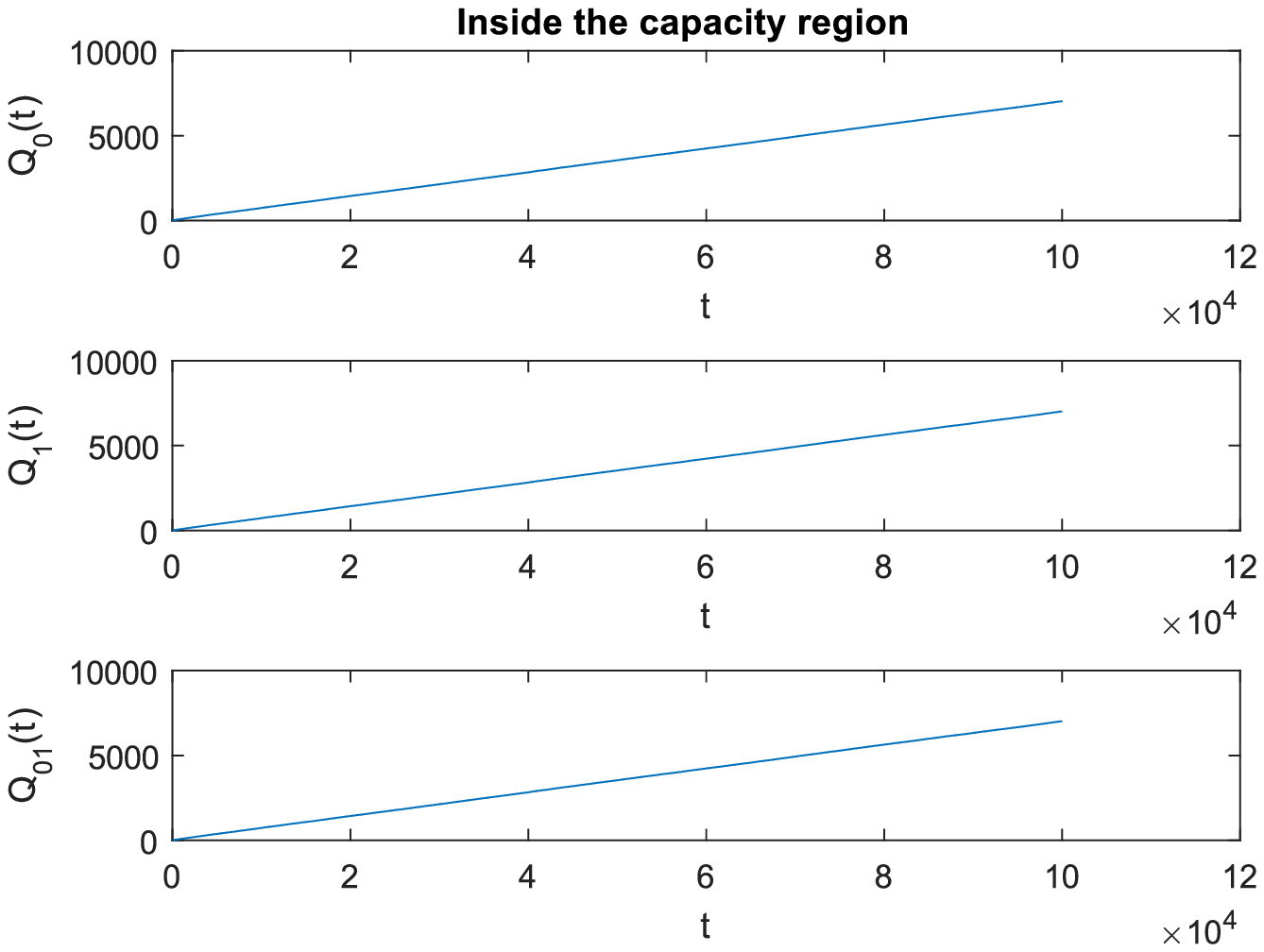}
	\caption{Time evolution of queues for $A-\boldsymbol{\epsilon}=(0.59,0.19)$ under Q-CSMA.}
	\label{Qevol1}
\end{figure}

\begin{figure}
	\centering
		\includegraphics[scale=.45]{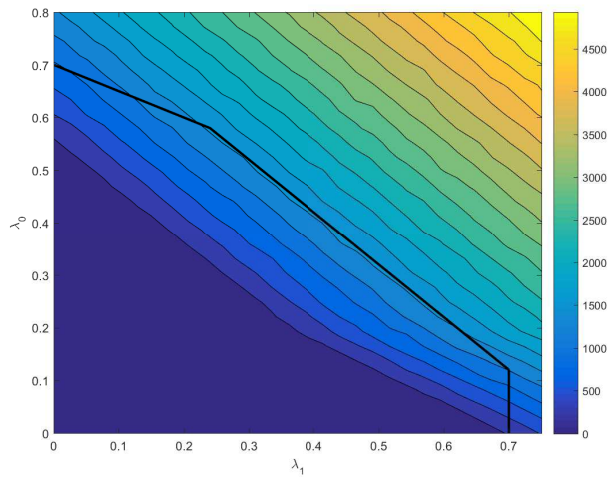}
	\caption{Contour plot of $Q_{avg}$ versus $\lambda_0$ and $\lambda_1$ evaluated by UB.}
	\label{8avgplot}
\end{figure}

\begin{figure}
	\centering
		\includegraphics[scale=.5]{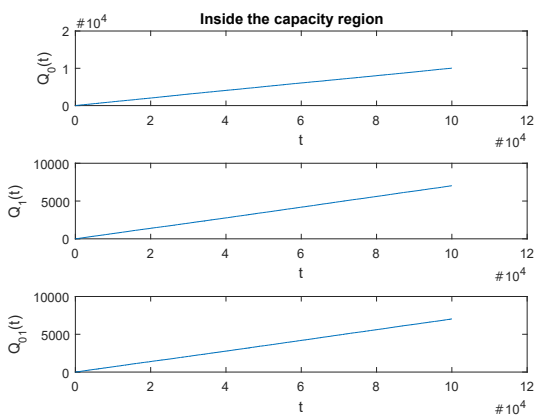}
	\caption{Time evolution of queues for $A-\boldsymbol{\epsilon}=(0.59,0.19)$ under UB.}
	\label{8evol1}
\end{figure}

\subsection{The effect of the number of relay nodes}
\label{sec:ResultsFoMoreThanOneSU}

Next, we numerically compare the performance of RQ-CSMA with UB, MWS and Q-CSMA algorithms when there is more than one relay node. For MWS algorithm, at each time slot, we choose an action for each node which maximizes the weights as defined in (\ref{W0}) and (\ref{Wi}). 
We evaluated the average queue lengths of all nodes for a given arrival rate vector as our performance measure and illustrate $90\%$ confidence intervals.

We consider a network with a source and three relay nodes. We choose $\rho_{0}=0.4$, $\rho_{1}=0.7$, $\rho_{2}=0.8$ and $\rho_{3}=0.7$ for the channel state statistics.

We compare the performance of the algorithms in terms of average queue lengths of all nodes. We begin with an arrival rate vector $\boldsymbol{\lambda} = \left(0.4,\ 0.05,\ 0.05,\ 0.05\right)$ and linearly increase arrival rate of node $0$ from $0.4$ to $0.4+\gamma$. For each algorithm, for a fixed $\gamma$ we run 10 independent simulations and take the average. The average queue lengths of nodes with respect to $\gamma$ is shown in Figure \ref{sim1}, where the vertical bars represent the $90\%$ confidence interval. It can be seen that RQ-CSMA outperforms UB which does not take into account the queue length information and also Q-CSMA which does not take channel state information into account. 

Although the average queue length information gives us an idea about the performance of these algorithms, it does not say much about the stability of the queues. Hence in the following, we compare the algorithms in terms of their queue length evolution with respect to time for a given arrival rate vector.

\begin{figure}
	\centering
		\includegraphics[scale=0.5]{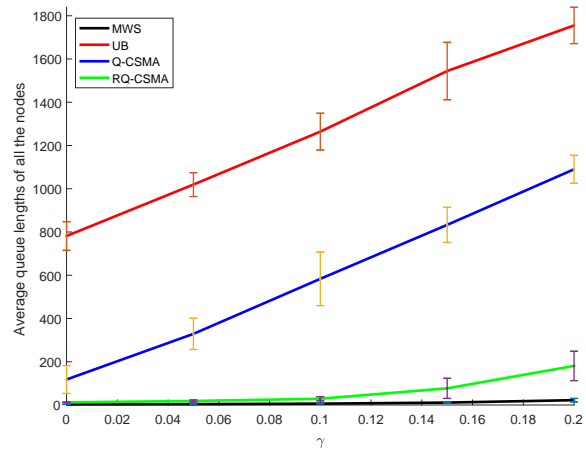}
	\caption{The evolution of the average of the sum of queue lengths vs.\ $\gamma$ over 10 random simulation runs and for an arrival rate vector $\boldsymbol{\lambda} = \left(0.4+\gamma,\ 0.05,\ 0.05,\ 0.05\right)$. The vertical error bars represent the $90\%$ confidence intervals. }
	\label{sim1}
\end{figure}

%In this section, we take a further step into the process of comparison of the algorithms. 
It can be seen from Figure \ref{sim1} that for $\gamma=0.2$ (i.e., $\lambda_{0} = 0.6$) there is an increase in the average queue lengths of all nodes in all algorithms suggesting that the network is operating at a point close to the boundary of its achievable rate region. We pick the arrival rate vector as $\boldsymbol{\lambda} = \left(0.6,\ 0.05,\ 0.05,\ 0.05\right)$ where the increase in average queue lengths of nodes is apparent in MWS algorithm according to Figure \ref{sim1}, i.e., $\boldsymbol{\lambda}$ is close to the boundary of the stability region. We again run each algorithm for arrival rate vector $\boldsymbol{\lambda}$ for 10 random seeds and then we take their average. The evolution of the queue lengths of nodes (where the sizes of all queues are summed) versus time is given in Figure \ref{sim2}, where the vertical bars represent the $90\%$ confidence interval.  It can be seen that for the desired arrival rate vector, MWS and RQ-CSMA can stabilize the queues whereas in UB and Q-CSMA the queue sizes tend to increase as the time evolves suggesting the instability of the queues.

\begin{figure}
	\centering
		\includegraphics[scale=.5]{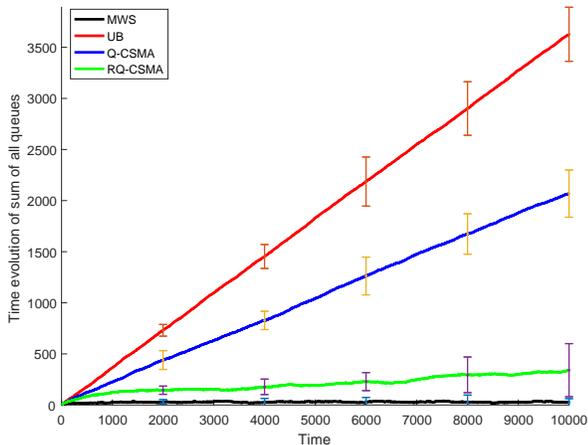}
	\caption{The evolution of the average of the sum of queue lengths vs.\ time over 10 random simulation runs and for an arrival rate vector $\boldsymbol{\lambda} = \left(0.6,\ 0.05,\ 0.05,\ 0.05\right)$. The vertical error bars represent the $90\%$ confidence intervals. }
	\label{sim2}
\end{figure}

\section{Conclusion}
\label{sec:Conclusion}
In this paper, we addressed the problem of scheduling in wireless relay networks with a source and multiple relay nodes where all nodes transmit to a common destination node over independent ON-OFF channels.  In this relay network, $N$ relay nodes help a single node with bad channel quality by relaying its packets. The scheduling is based on a variant of well known Q-CSMA algorithm, and we prove the throughput optimality of the developed algorithm.  We proposed a new contention resolution algorithm, which is used by the nodes to infer which nodes have ON channels; an information required by the scheduling algorithm for achieving throughput optimality. RQ-CSMA is different from Q-CSMA in the sense that it uses channel state information to schedule the nodes and it actually benefits from time variability of the channel states. The performance of the RQ-CSMA, Q-CSMA, MWS and a simple random back-off based algorithms are compared through simulations to demonstrate the efficacy of RQ-CSMA.

As a future work, we will consider the extension of RQ-CSMA to multi-hop and multi source nodes relay networks. Also note that the throughput optimality of RQ-CSMA depends on the assumption that nodes are able to perfectly infer the channel state vector in the contention slot. The investigation of scheduling algorithms with partial and incomplete channel state information is left as future work.

%\newpage
%\appendix
\section*{Proof of Lemma \ref{capregion}}
To simplify the proof, we assume that node $1$ has a single queue and stores the received packets from node $0$ in its own queue. Note that this does not change the stability region of the network. 
Define $\mu_0(t)$ and $\mu_1(t)$ as the instantaneous service rates of node $0$ and node $1$, respectively, and $\mu_{01}(t)$ as the  instantaneous service rate of node $0$ if it relays its packets to node $1$. Note that $\mu_0(t)$, $\mu_1(t)$ and $\mu_{01}(t)$ take binary values $0$ or $1$ and satisfy the following inequality $\mu_0(t)+\mu_1(t)+\mu_{01}(t)\leq 1$. Then, the evolution of the queues are given as:
\begin{align}
Q_0(t+1) = \max\left[Q_0(t)-\mu_0(t)-\mu_{01}(t),0\right]+A_0(t)\nonumber\\
Q_1(t+1) = \max\left[Q_1(t)-\mu_1(t),0\right]+A_1(t)+\mu_{01}(t)
\end{align}

It is well known that when MWS is applied the resulting service rates are throughput optimal. Thus, the expected service rates can be calculated as follow:

\begin{align}
\mathds{E}\left(\mu_0(t)|Q_0(t),Q_1(t)\right) &= \rho_0(1-\rho_1)+\rho_0\rho_1 \mathds{1}_{\left\{Q_0(t)\geq Q_1(t)\right\}}\nonumber\\
\mathds{E}\left(\mu_{01}(t)|Q_0(t),Q_1(t)\right) &= (1-\rho_0)(1-\rho_1)\mathds{1}_{\left\{Q_0(t)-Q_1(t)>0\right\}}\nonumber\\
&+(1-\rho_0)\rho_1\mathds{1}_{\left\{Q_0(t)-Q_1(t)>Q_1(t)\right\}}\nonumber\\
\mathds{E}\left(\mu_1(t)|Q_0(t),Q_1(t)\right) &= (1-\rho_0)\rho_1\mathds{1}_{\left\{Q_0(t)-Q_1(t)\leq Q_1(t)\right\}}\nonumber\\
&+\rho_0\rho_1 \mathds{1}_{\left\{Q_0(t)< Q_1(t)\right\}}
\end{align}

Denote $\textbf{Q}(t) = (Q_0(t),Q_1(t))$. The drift in the Lyapunov function, $L(\textbf{Q}(t)) =\frac{1}{2}(Q^2_0(t)+Q^2_1(t))$, can be written as:
\begin{align}
L(\textbf{Q}(t+1))-L(\textbf{Q}(t)) &= \frac{1}{2}\sum^{1}_{i=0}(Q^2_i(t+1)-Q^2_i(t))\nonumber\\
&\leq \frac{(\mu_0(t)+\mu_{01}(t))^2+A^2_0(t)}{2}\nonumber\\
&+Q_0(t)(A_0(t)-\mu_0(t)-\mu_{01}(t))\nonumber\\
&+\frac{\mu^2_{1}(t)+(A_1(t)+\mu_{01}(t))^2}{2}\nonumber\\
&+Q_1(t)(A_1(t)+\mu_{01}(t)-\mu_1(t))
\end{align}
Now define $B$ as a bound on the summation of first and third terms above as follows:

\begin{align}
\frac{(\mu_0(t)+\mu_{01}(t))^2+A^2_0(t)}{2}+\frac{\mu^2_{1}(t)+(A_1(t)+\mu_{01}(t))^2}{2}\leq B
\end{align}
Now let us assume that $\rho_1<0.5$ so that the achievable rate region is as in (\ref{R1}). First we focus on $\rho_0\leq \lambda_0<\rho_0+(1-\rho_0)\rho_1$ and $0\leq \lambda_1\leq \rho_1(1-\rho_0)$. For small positive value of $\epsilon$ we have $\lambda_0+\lambda_1+\epsilon=\rho_0+(1-\rho_0)\rho_1$. We write the conditional Lyapunov drift as follow:

\begin{align}
&L(\textbf{Q}(t+1)|Q_0(t)<Q_1(t))-L(\textbf{Q}(t)|Q_0(t)<Q_1(t))\leq B\nonumber\\
&+ (\lambda_0-\rho_0(1-\rho_1))Q_0(t)+(\lambda_1-\rho_1)Q_1(t)\nonumber\\
&\leq B+ (\lambda_0-\rho_0(1-\rho_1))Q_0(t)+(\lambda_1-\rho_1)Q_0(t)\nonumber\\
&\leq B-\epsilon Q_0(t) \leq B-\epsilon_1 \sum^{1}_{i=0}Q_{i}(t)
\end{align}
With a similar approach we can show that:
\begin{align}
&L(\textbf{Q}(t+1)|Q_1(t)\leq Q_0(t)\leq 2Q_1(t))\nonumber\\
&-L(\textbf{Q}(t)|Q_1(t)\leq Q_0(t)\leq 2Q_1(t))\nonumber\\
&\leq B-\epsilon_2 \sum^{1}_{i=0}Q_{i}(t)
\end{align}
Expectation over the $\textbf{Q}(t)$ yields:
\begin{align}
\mathds{E}\left(L(\textbf{Q}(t+1))-L(\textbf{Q}(t))\right)<B-\acute{\epsilon}\sum^{1}_{i=0}\mathds{E}(Q_{i}(t))
\end{align}
The above holds for all $t \in\left\{0, 1, 2,\ldots\right\}$ and with the same analysis as above holds for all points inside the achievable rate region defined by (\ref{R1}). Summing over $t \in\left\{0, 1, 2,\ldots\right\}$ for some integer
$T>0$ yields (by telescoping sums):
\begin{align}
\mathds{E}(L(\textbf{Q}(T))-L(\textbf{Q}(0)))\leq BT-\acute{\epsilon}\sum^{T-1}_{t=0}\sum^{1}_{i=0}\mathds{E}(Q_{i}(t))
\end{align}
Rearranging the terms and taking the limit yields:
\begin{align}
\lim_{T\rightarrow\infty}\frac{1}{T}\sum^{T-1}_{t=0}\sum^{1}_{i=0}\mathds{E}(Q_{i}(t))\leq \frac{B}{\acute{\epsilon}}
\end{align}
Thus, all queues are strongly stable. 
The proof of the Lemma for $\rho_1\geq 0.5$ is straightforward and can be done with a similar analysis as above. For more information refer to \cite{Mytez}.

\bibliographystyle{IEEEtran} % Use the "unsrtnat" BibTeX style for formatting the Bibliography
\bibliography{Bibliography} % The references (bibliography) information are stored in the file named "Bibliography.bib"

% Generated by IEEEtran.bst, version: 1.13 (2008/09/30)
\begin{thebibliography}{10}
\providecommand{\url}[1]{#1}
\csname url@samestyle\endcsname
\providecommand{\newblock}{\relax}
\providecommand{\bibinfo}[2]{#2}
\providecommand{\BIBentrySTDinterwordspacing}{\spaceskip=0pt\relax}
\providecommand{\BIBentryALTinterwordstretchfactor}{4}
\providecommand{\BIBentryALTinterwordspacing}{\spaceskip=\fontdimen2\font plus
\BIBentryALTinterwordstretchfactor\fontdimen3\font minus
  \fontdimen4\font\relax}
\providecommand{\BIBforeignlanguage}[2]{{%
\expandafter\ifx\csname l@#1\endcsname\relax
\typeout{** WARNING: IEEEtran.bst: No hyphenation pattern has been}%
\typeout{** loaded for the language `#1'. Using the pattern for}%
\typeout{** the default language instead.}%
\else
\language=\csname l@#1\endcsname
\fi
#2}}
\providecommand{\BIBdecl}{\relax}
\BIBdecl

\bibitem{RN1}
B.~Zafar, S.~Gherekhloo, and M.~Haardt, ``Analysis of multihop relaying
  networks: Communication between range-limited and cooperative nodes,''
  \emph{IEEE Vehicular Technology Magazine}, vol.~7, no.~3, pp. 40--47, Sept
  2012.

\bibitem{RN2}
J.~Boyer, D.~D. Falconer, and H.~Yanikomeroglu, ``Multihop diversity in
  wireless relaying channels,'' \emph{IEEE Transactions on Communications},
  vol.~52, no.~10, pp. 1820--1830, Oct 2004.

\bibitem{coop-top}
Q.~Guan, F.~R. Yu, S.~Jiang, V.~C.~M. Leung, and H.~Mehrvar, ``Topology control
  in mobile ad hoc networks with cooperative communications,'' \emph{IEEE
  Wireless Communications}, vol.~19, no.~2, pp. 74--79, April 2012.

\bibitem{RN3}
A.~Bendjaballah, H.~E. Ghannudi, L.~Clavier, N.~Rolland, and P.~A. Rolland,
  ``Multihop wireless communications with non regenerative relays,'' in
  \emph{2006 European Conference on Wireless Technology}, Sept 2006, pp.
  189--192.

\bibitem{tas1}
L.~Tassiulas and A.~Ephremides, ``Stability properties of constrained queueing
  systems and scheduling policies for maximum throughput in multihop radio
  networks,'' \emph{Automatic Control, IEEE Transactions on}, vol.~37, no.~12,
  pp. 1936--1948, Dec 1992.

\bibitem{cent-opt}
E.~M. Yeh and R.~A. Berry, ``Throughput optimal control of cooperative relay
  networks,'' \emph{IEEE Transactions on Information Theory}, vol.~53, no.~10,
  pp. 3827--3833, Oct 2007.

\bibitem{neely-div}
M.~J. Neely, ``Optimal backpressure routing for wireless networks with
  multi-receiver diversity,'' in \emph{2006 40th Annual Conference on
  Information Sciences and Systems}, March 2006, pp. 18--25.

\bibitem{ephremiduscog}
A.~K. Sadek, K.~J.~R. Liu, and A.~Ephremides, ``Cognitive multiple access via
  cooperation: Protocol design and performance analysis,'' \emph{IEEE
  Transactions on Information Theory}, vol.~53, no.~10, pp. 3677--3696, Oct
  2007.

\bibitem{pappas}
N.~Pappas, J.~Jeon, A.~Ephremides, and A.~Traganitis, ``Wireless network-level
  partial relay cooperation.'' in \emph{ISIT}, 2012, pp. 1122--1126.

\bibitem{R2}
O.~Simeone, Y.~Bar-Ness, and U.~Spagnolini, ``Stable throughput of cognitive
  radios with and without relaying capability,'' \emph{IEEE Transactions on
  Communications}, vol.~55, no.~12, pp. 2351--2360, Dec 2007.

\bibitem{R4}
I.~Krikidis, J.~N. Laneman, J.~S. Thompson, and S.~Mclaughlin, ``Protocol
  design and throughput analysis for multi-user cognitive cooperative
  systems,'' \emph{IEEE Transactions on Wireless Communications}, vol.~8,
  no.~9, pp. 4740--4751, September 2009.

\bibitem{R5}
I.~Krikidis, N.~Devroye, and J.~S. Thompson, ``Stability analysis for cognitive
  radio with multi-access primary transmission,'' \emph{IEEE Transactions on
  Wireless Communications}, vol.~9, no.~1, pp. 72--77, January 2010.

\bibitem{srikantstable}
A.~Eryilmaz, R.~Srikant, and J.~Perkins, ``Stable scheduling policies for
  fading wireless channels,'' \emph{Networking, IEEE/ACM Transactions on},
  vol.~13, no.~2, pp. 411--424, April 2005.

\bibitem{loynes}
R.~Loynes, ``The stability of a queue with non-independent inter-arrival and
  service times,'' \emph{Proc. Camb. Philos. Soc.}, vol.~58, pp. 497--520,
  1962.

\bibitem{MWScog}
R.~Urgaonkar and M.~J. Neely, ``Opportunistic scheduling with reliability
  guarantees in cognitive radio networks,'' \emph{IEEE Transactions on Mobile
  Computing}, vol.~8, no.~6, pp. 766--777, June 2009.

\bibitem{MWSrelay}
S.~Moharir and S.~Shakkottai, ``Maxweight vs. backpressure: Routing and
  scheduling in multi-channel relay networks,'' in \emph{INFOCOM, 2013
  Proceedings IEEE}, April 2013, pp. 1537--1545.

\bibitem{MWS1}
M.~Andrews, K.~Kumaran, K.~Ramanan, A.~Stolyar, R.~Vijayakumar, and P.~Whiting,
  ``Scheduling in a queueing system with asynchronously varying service
  rates,'' in \emph{Probability in the Engineering and Informational Sciences},
  pp. 191--217.

\bibitem{neelynow}
\BIBentryALTinterwordspacing
L.~Georgiadis, M.~J. Neely, and L.~Tassiulas, ``Resource allocation and
  cross-layer control in wireless networks,'' \emph{Found. Trends Netw.},
  vol.~1, no.~1, pp. 1--144, Apr. 2006. [Online]. Available:
  \url{http://dx.doi.org/10.1561/1300000001}
\BIBentrySTDinterwordspacing

\bibitem{np}
M.~R. Garey and D.~S. Johnson, \emph{Computers and Intractability; A Guide to
  the Theory of NP-Completeness}.\hskip 1em plus 0.5em minus 0.4em\relax New
  York, NY, USA: W. H. Freeman \& Co., 1990.

\bibitem{srofnp}
G.~Sharma, N.~Shroff, and R.~Mazumdar, ``Maximum weighted matching with
  interference constraints,'' in \emph{Pervasive Computing and Communications
  Workshops, 2006. PerCom Workshops 2006. Fourth Annual IEEE International
  Conference on}, March 2006, pp. 5 pp.--74.

\bibitem{alon}
N.~Alon, L.~Babai, and A.~Itai, ``A fast and simple randomized parallel
  algorithm for the maximal independent set problem,'' \emph{Journal of
  Algorithms}, vol.~7, no.~4, pp. 567 -- 583, 1986.

\bibitem{srikanteff}
X.~Wu, R.~Srikant, and J.~Perkins, ``Scheduling efficiency of distributed
  greedy scheduling algorithms in wireless networks,'' \emph{Mobile Computing,
  IEEE Transactions on}, vol.~6, no.~6, pp. 595--605, June 2007.

\bibitem{greedgood}
M.~Halldórsson and J.~Radhakrishnan, ``\BIBforeignlanguage{English}{Greed is
  good: Approximating independent sets in sparse and bounded-degree graphs},''
  \emph{\BIBforeignlanguage{English}{Algorithmica}}, vol.~18, no.~1, pp.
  145--163, 1997.

\bibitem{birand}
B.~Birand, M.~Chudnovsky, B.~Ries, P.~Seymour, G.~Zussman, and Y.~Zwols,
  ``Analyzing the performance of greedy maximal scheduling via local pooling
  and graph theory,'' \emph{Networking, IEEE/ACM Transactions on}, vol.~20,
  no.~1, pp. 163--176, Feb 2012.

\bibitem{srofimperfect}
X.~Lin and N.~Shroff, ``The impact of imperfect scheduling on cross-layer rate
  control in wireless networks,'' in \emph{INFOCOM 2005. 24th Annual Joint
  Conference of the IEEE Computer and Communications Societies. Proceedings
  IEEE}, vol.~3, March 2005, pp. 1804--1814 vol. 3.

\bibitem{srikantQ}
A.~Gupta, X.~Lin, and R.~Srikant, ``Low-complexity distributed scheduling
  algorithms for wireless networks,'' \emph{Networking, IEEE/ACM Transactions
  on}, vol.~17, no.~6, pp. 1846--1859, Dec 2009.

\bibitem{srofQ}
C.~Joo and N.~Shroff, ``Performance of random access scheduling schemes in
  multi-hop wireless networks,'' in \emph{Signals, Systems and Computers, 2006.
  ACSSC '06. Fortieth Asilomar Conference on}, Oct 2006, pp. 1937--1941.

\bibitem{csma}
L.~Jiang and J.~Walrand, ``A distributed csma algorithm for throughput and
  utility maximization in wireless networks,'' Tech. Rep., 2009.

\bibitem{qcsma}
J.~Ni, B.~Tan, and R.~Srikant, ``Q-csma: Queue-length-based csma/ca algorithms
  for achieving maximum throughput and low delay in wireless networks,''
  \emph{Networking, IEEE/ACM Transactions on}, vol.~20, no.~3, pp. 825--836,
  June 2012.

\bibitem{srofgain}
X.~Liu, E.~K.~P. Chong, and N.~B. Shroff, ``A framework for opportunistic
  scheduling in wireless networks,'' \emph{Computer Networks}, vol.~41, pp.
  451--474, 2003.

\bibitem{joo}
C.~Joo, ``On random access scheduling for multimedia traffic in multihop
  wireless networks with fading channels,'' \emph{Mobile Computing, IEEE
  Transactions on}, vol.~12, no.~4, pp. 647--656, April 2013.

\bibitem{srofjoo}
C.~Joo, X.~Lin, J.~Ryu, and N.~B. Shroff, ``Distributed greedy approximation to
  maximum weighted independent set for scheduling with fading channels,'' in
  \emph{Proceedings of the Fourteenth ACM International Symposium on Mobile Ad
  Hoc Networking and Computing}, ser. MobiHoc '13.\hskip 1em plus 0.5em minus
  0.4em\relax New York, NY, USA: ACM, 2013, pp. 89--98.

\bibitem{yun}
\BIBentryALTinterwordspacing
S.~Yun, J.~Shin, and Y.~Yi, ``Medium access over time-varying channels with
  limited sensing cost,'' \emph{CoRR}, vol. abs/1206.5054, 2012. [Online].
  Available: \url{http://arxiv.org/abs/1206.5054}
\BIBentrySTDinterwordspacing

\bibitem{aloha1}
S.~Adireddy and L.~Tong, ``Exploiting decentralized channel state information
  for random access,'' \emph{Information Theory, IEEE Transactions on},
  vol.~51, no.~2, pp. 537--561, Feb 2005.

\bibitem{aloha2}
X.~Qin and R.~Berry, ``Exploiting multiuser diversity for medium access control
  in wireless networks,'' in \emph{INFOCOM 2003. Twenty-Second Annual Joint
  Conference of the IEEE Computer and Communications. IEEE Societies}, vol.~2,
  March 2003, pp. 1084--1094 vol.2.

\bibitem{dos1}
D.~Zheng, W.~Ge, and J.~Zhang, ``Distributed opportunistic scheduling for ad
  hoc networks with random access: An optimal stopping approach,''
  \emph{Information Theory, IEEE Transactions on}, vol.~55, no.~1, pp.
  205--222, Jan 2009.

\bibitem{dos2}
H.~Chen and J.~Baras, ``Distributed opportunistic scheduling for wireless
  ad-hoc networks with block-fading model,'' \emph{Selected Areas in
  Communications, IEEE Journal on}, vol.~31, no.~11, pp. 2324--2337, November
  2013.

\bibitem{cell-cent}
H.~Viswanathan and S.~Mukherjee, ``Performance of cellular networks with relays
  and centralized scheduling,'' \emph{IEEE Transactions on Wireless
  Communications}, vol.~4, no.~5, pp. 2318--2328, Sept 2005.

\bibitem{tas-cell}
D.~Giatsios, G.~Iosifidis, and L.~Tassiulas, ``Mobile edge-networking
  architectures and control policies for 5g communication systems,'' in
  \emph{2016 14th International Symposium on Modeling and Optimization in
  Mobile, Ad Hoc, and Wireless Networks (WiOpt)}, May 2016, pp. 1--8.

\bibitem{d2d-survey}
A.~Asadi, Q.~Wang, and V.~Mancuso, ``A survey on device-to-device communication
  in cellular networks,'' \emph{IEEE Communications Surveys Tutorials},
  vol.~16, no.~4, pp. 1801--1819, Fourthquarter 2014.

\bibitem{d2d-relay}
Y.-D. Lin and Y.-C. Hsu, ``Multihop cellular: a new architecture for wireless
  communications,'' in \emph{Proceedings IEEE INFOCOM 2000. Conference on
  Computer Communications. Nineteenth Annual Joint Conference of the IEEE
  Computer and Communications Societies (Cat. No.00CH37064)}, vol.~3, Mar 2000,
  pp. 1273--1282 vol.3.

\bibitem{d2d-mulhop1}
B.~Kaufman and B.~Aazhang, ``Cellular networks with an overlaid device to
  device network,'' in \emph{2008 42nd Asilomar Conference on Signals, Systems
  and Computers}, Oct 2008, pp. 1537--1541.

\bibitem{d2d-mulhop2}
K.~Doppler, M.~Rinne, C.~Wijting, C.~B. Ribeiro, and K.~Hugl,
  ``Device-to-device communication as an underlay to lte-advanced networks,''
  \emph{IEEE Communications Magazine}, vol.~47, no.~12, pp. 42--49, Dec 2009.

\bibitem{flash}
X.~Wu, S.~Tavildar, S.~Shakkottai, T.~Richardson, J.~Li, R.~Laroia, and
  A.~Jovicic, ``Flashlinq: A synchronous distributed scheduler for peer-to-peer
  ad hoc networks,'' in \emph{2010 48th Annual Allerton Conference on
  Communication, Control, and Computing (Allerton)}, Sept 2010, pp. 514--521.

\bibitem{auction1}
M.~H. Hajiesmaili, L.~Deng, M.~Chen, and Z.~Li, ``Incentivizing
  device-to-device load balancing for cellular networks: An online auction
  design,'' \emph{IEEE Journal on Selected Areas in Communications}, vol.~PP,
  no.~99, pp. 1--1, 2017.

\bibitem{auction2}
S.~Swain, N.~Mishra, S.~Rath, and B.~P.~S. Sahoo, ``Spectrum sharing for d2d
  communication in 5g cellular networks: An auction-based model,'' in
  \emph{2016 IEEE Annual India Conference (INDICON)}, Dec 2016, pp. 1--6.

\bibitem{auction3}
N.~H. Almofari, S.~Kishk, and F.~W. Zaki, ``Auction based algorithm for
  distributed resource allocation in multitier-heterogeneous cellular
  networks,'' in \emph{2016 11th International Conference on Computer
  Engineering Systems (ICCES)}, Dec 2016, pp. 426--433.

\bibitem{lterelay}
R.~N. Braithwaite, ``Improving data throughput for cell-edge users in a lte
  network using up-link harq relays,'' in \emph{Vehicular Technology Conference
  (VTC Fall), 2011 IEEE}, Sept 2011, pp. 1--5.

\bibitem{qcsmaeylem}
S.~Li, E.~Ekici, and N.~Shroff, ``Throughput-optimal queue length based csma/ca
  algorithm for cognitive radio networks,'' \emph{IEEE Transactions on Mobile
  Computing}, vol.~14, no.~5, pp. 1098--1108, May 2015.

\bibitem{Kelly}
F.~P. Kelly, \emph{Reversibility and Stochastic Networks}.\hskip 1em plus 0.5em
  minus 0.4em\relax New York, NY, USA: Cambridge University Press, 2011.

\bibitem{Ahuja}
R.~K. Ahuja, T.~L. Magnanti, and J.~B. Orlin, \emph{Network Flows: Theory,
  Algorithms, and Applications}.\hskip 1em plus 0.5em minus 0.4em\relax Upper
  Saddle River, NJ, USA: Prentice-Hall, Inc., 1993.

\bibitem{Mytez}
M.~Salehi Heydar~Abad, ``{Optimal Distributed Scheduling Algorithm for
  Cooperative Communication Networks},'' Master's thesis, Sabanci University,
  2015.

\end{thebibliography}
\end{document}